\newtheorem{proposition}{Proposition}
\definecolor{dark-gray}{rgb}{.35,.55,.55}
\definecolor{dark-blue}{rgb}{.0,.0,.6}
\newcommand{\pr}[2][]{ \mathbb{P}_{#1} \left(  #2  \right) }
\newcommand{\id}{ \mathbbm{1} }
\newcommand{\flux}{ \mathfrak{f} }
\newcommand{\down}{ _{\downarrow} }
\def\maketitle{
\@author@finish
\title@column\titleblock@produce
\suppressfloats[t]}
\begin{document}

\title{Aging and Reliability of Quantum Networks}

\author{Lisa T. Weinbrenner} 
\email{lisa.weinbrenner@uni-siegen.de}
\affiliation{Naturwissenschaftlich-Technische Fakultät, 
Universität Siegen, Walter-Flex-Straße 3, 57068 Siegen, Germany}

\author{Lina Vandré}
\affiliation{Naturwissenschaftlich-Technische Fakultät, 
Universität Siegen, Walter-Flex-Straße 3, 57068 Siegen, Germany}
\affiliation{State Key Laboratory for Mesoscopic Physics, School of Physics and Frontiers Science Center for Nano-Optoelectronics, Peking University, Beijing 100871, China}

\author{Tim Coopmans}
\affiliation{Leiden Institute of Advanced Computer Science, Leiden University, Leiden, The Netherlands}

\author{Otfried Gühne}
\affiliation{Naturwissenschaftlich-Technische Fakultät, 
Universität Siegen, Walter-Flex-Straße 3, 57068 Siegen, Germany}

\begin{abstract}
Quantum information science may lead to technological breakthroughs in 
computing, cryptography and sensing. For the implementation of these 
tasks, however, complex devices with many components are needed 
and the quantum advantage may easily be spoiled by failure of few  
parts only. A paradigmatic example are quantum networks. There, 
not only noise sources like photon absorption or imperfect quantum 
memories lead to long waiting times and low fidelity, but also hardware 
components may break, leading to a dysfunctionality of the entire network.
For the successful long-term deployment of quantum networks in the future, 
it is important to take such deterioration effects into consideration during 
the design phase.
Using methods from reliability theory and the theory of aging we develop 
an analytical approach for characterizing the functionality of networks
under aging and repair mechanisms, also for non-trivial topologies. Combined 
with numerical simulations, our results allow to optimize 
long-distance entanglement {distribution} under aging effects.  
\end{abstract}

\date{\today}

\maketitle

\section{Introduction}

Modern experimental science is based on complex technical devices for 
measuring and manipulating physical systems. Examples are the Large 
Hadron Collider for testing the limits of the standard model and the 
observatory LIGO for detecting gravitational waves. All these experimental 
setups consist of many different parts, which must be functional for 
the whole device to be operational. In quantum technologies, typical 
setups include quantum computers or quantum networks which consist of 
many smaller quantum systems. These constituents are not only prone to 
decoherence and noise, but they may also fail completely.  
For overcoming this, redundancy 
and quantum error correction can be used and there is a trade-off
between the required redundancy and the quality of the elementary devices. 
This trade-off is also relevant for comparing different quantum computing 
platforms, such as solid-state systems with many qubits and relatively
short coherence times with ion traps, which have fewer but long-lived
qubits \cite{linke2017, blinov2021comparison}.
For the long-term success, developing different approaches towards quantum technologies is important, but an advanced theory for analyzing the pros 
and cons of different implementation strategies is needed.

Besides quantum computers, quantum networks are a central paradigm of quantum 
technologies \cite{Kimble2008, Wehner2018}. The aim of these networks is to enable 
global quantum communication, but they are also useful for distributed sensing \cite{Proctor2018, Guo2020, Sekatski2020}, 
clock synchronisation \cite{Komar2014} and blind 
quantum computation \cite{Barz2012}. Consequently, there are many theoretical proposals for 
robust and efficient networks, using techniques like quantum repeaters \cite{briegel1998quantum, Duan2001, Sangouard2011, azuma2022quantum}, multiplexing \cite{Collins2007, munro2010quantum,sinclair2014spectral,dhara2021subexponential} and advanced quantum state encodings \cite{azuma2015, borregaard2020oneway, muralidharan2016optimal}.
In the last years, different building 
blocks of quantum networks were experimentally demonstrated, such as quantum 
teleportation between non-neighboring nodes \cite{2022Natur.605..663H}, satellite to ground 
communication \cite{Yin2020} and multiplexing 
\cite{2017NatCo...815359P, sperling2015}. All these implementations are, apart from the ubiquitous noise and photon loss, subject to 
deteriorating effects: devices may be defect from the beginning or get 
destroyed during the experiment. In practice, devices may also fail only temporarily, due to, for example, overheating.
While noise {and loss} in networks have been discussed
in detail
and may be overcome by entanglement purification, error correction {and other heralded procedures~\cite{muralidharan2016optimal}},
the effects of aging, breaking and repairing of {hardware} components are rarely studied. 
The present literature closest to this topic focuses on the waiting times 
{and quantum-state quality} for the functionality of a chain \cite{Azuma2021,Collins2007, praxmeyer2013reposition, Shchukin2019, kamin2022exact, vinay2019statistical, dai2021entanglement,brand2020efficient,li2021,coopmans2022improved,inesta2023optimal}, 
where temporary {hardware} failure and recovery are absent, or global success probabilities for networks 
\cite{Khatri2019, bugalho2023distributing}, where the topological structure of a specific network is often not taken into account.
For a full characterization 
of aging effects and reliability, three main difficulties arise: First, the 
devices are highly dependent on each other due to feedback loops.
Second, 
the topologies of networks are inherently more complicated than simple 
repeater chains. Third, the temporary failures of single devices may 
lead to temporal dependencies of the working probability and correlations 
in time.

The goal of this paper is twofold: First, we show how ideas and concepts 
from reliability theory \cite{gnedenko2014mathematical, Gavrilov2001, bazovsky2004reliability}, known 
from sociology and the theory of aging, can be applied to the analysis 
of quantum networks. Second, we develop an analytical theory for 
discussing temporary failure and repairing in quantum networks. 
In contrast to classical networks, where adding redundant hardware is cheap, current quantum hardware is fragile so that it is 
essential to have complete understanding of the network 
behaviour to find the optimal protocols for it. 
Our approach differs from network percolation theory \cite{Newman2018, Albert2002, coutinho2022robustness} 
by focusing on the temporal failure and recovery 
and is in line with recent work on fabrication defects
in planar quantum computers \cite{strikis2021quantum}.
The presented methods lead to various results.
First, depending on a 
given figure of merit for the entire network, we can specify which type and quantity of 
the redundant devices like quantum memories and entanglement sources are 
needed.
Second, we can characterize the temporal correlations 
of a quantum network depending on the failure and repairing of each 
single device.
Third, our results lead to improved cut-off times for key-generation
protocols performed on these networks.
As a main example we use the suggested 
configuration for a real quantum network in the Netherlands \cite{rabbie2022designing}, but 
our results are easily extendable to more complex topologies.

The article is structured as follows: First, we explain in Sections~II and~III the necessary concepts of quantum networks, and the theory of aging and reliability theory. In Section~IV we then apply reliability theory to quantum networks in general and demonstrate this approach for two different network topologies. In Section~V we then introduce the concept of repairing components in quantum networks, and analyze the temporal correlations of broken components via different correlation measures in Section~VI. At last, we use our results in Section~VII to obtain improvements for quantum protocols performed on two different topologies of imperfect quantum devices.

\section{Quantum Networks}

It is the general aim of a quantum network to generate entanglement 
in two distinct, far apart stations. 
However, the loss probability when exchanging entangled particles through a glass fibre increases exponentially with the distance.
Therefore, one can use the concept of a quantum 
repeater \cite{briegel1998quantum, Duan2001, Sangouard2011}, where several intermediate repeater stations are built between the end stations 
(see Fig.~\ref{fig:repeaterchain}). Entanglement is first created 
and stored between adjacent repeater stations and then converted to 
long-range entanglement via Bell measurements within the stations. 
This setup allows to overcome the exponential loss scaling. More 
generally, a quantum network may consist of many nodes which are 
connected by edges in various ways, leading to a complicated 
topology. 

\begin{figure}[t]
\includegraphics[width=0.9\columnwidth]{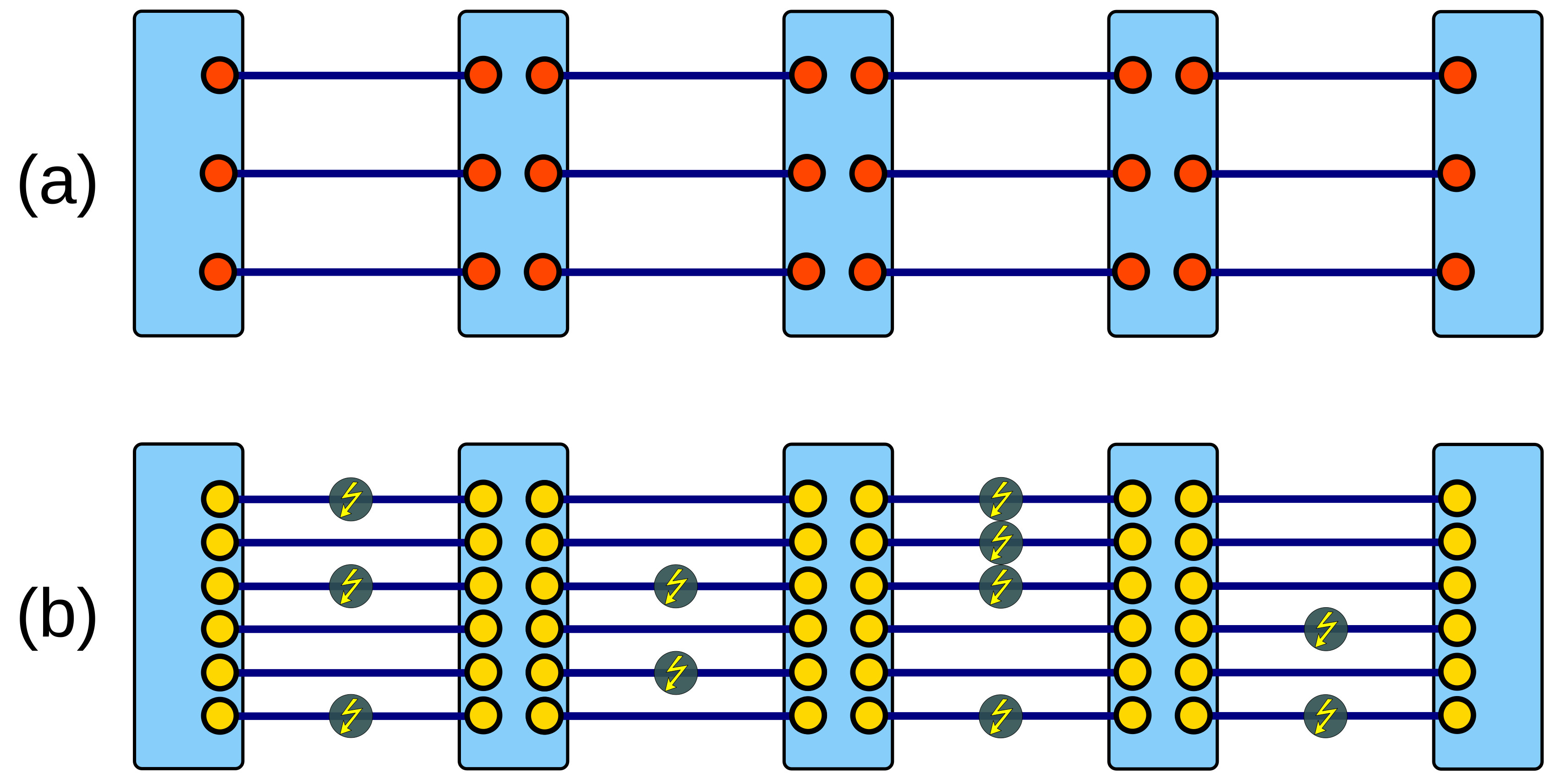}
\caption{(a) Schematic view of a multiplexed repeater chain, with 
$M+1=5$ nodes and $M=4$ edges, each edge having a multiplicity of 
$N=3$. 
(b) Alternative approach for the implementation of a similar 
repeater chain, where initially $N=6$ physical connections have 
been established within the edges, but some of them are not 
functioning from the beginning. In this example, on average 
three physical connections are working. In the depicted example, 
there are two connections between the end nodes (as one edge 
contains only two working connections), so the flux of the 
entire chain is two. Interestingly, both approaches, although 
being initially on average the same, can exhibit fundamentally 
different behaviour on the long time scale.
\label{fig:repeaterchain}
}
\end{figure}

In order to explain our main ideas, we first focus on a simple network, the 
so-called repeater chain, consisting of $M+1$ nodes connected with $M$ edges 
in a linear configuration, see also Fig.~\ref{fig:repeaterchain}. We allow  
multiplexing within the edges, meaning that each edge consists of $N$ physical 
connections. Each connection can be seen as a physical system, where entanglement 
may be established, e.g., a spatial or spectral mode of a fiber for 
photons. The concept of multiplexing was introduced to allow for higher 
entanglement generation rates \cite{Collins2007}.

Our main aim is to characterize at which times the network is functional 
if some devices or connections break down permanently or become dysfunctional 
for a certain time, e.g., due to ice formation on a chip in a cryostat, overheating, 
or mechanical failures. So, instead of describing 
the entanglement generation process and its time scales (as it has been 
done in Refs.~\cite{Khatri2019, Collins2007, Shchukin2019}, 
for example), we want 
to describe the functionality of the entire network depending on the 
functioning of each single device. 

\section{The theory of aging}

Biological and technical systems typically consist of many parts, with a complex 
structure of functional dependencies. This raises the question how long the entire 
system is functional, if some of the parts fail.  The description of the reliability 
of such a complex system is the main goal of reliability theory \cite{gnedenko2014mathematical,bazovsky2004reliability}. The two main quantities are the reliability function 
\begin{align}\label{eq:reliability}
S(t)=\pr{T>t},
\end{align}
which denotes the probability for a failure to occur at the failure time $T$ 
{\it after} time $t$, so it describes the probability for the system to work at 
least until time $t$. This directly allows to compute the mean time to 
failure via $\langle T \rangle = \int \!dt S(t)$. 
The failure rate
\begin{align}
\label{eq:failurerate}
\mu(t)= - \frac{\partial_t S}{S} = -\partial_t\ln S(t)
\end{align}
describes the probability of a failure in the next time interval 
given the survival of the system until time $t$. Characteristic for 
the phenomenon of aging is the fact that the failure rate increases 
with time. 

Starting from these two quantities, reliability theory offers a vast
spectrum of results and algorithms. The reliability of a complicated system
can, e.g., be calculated by a connection to repairable flow networks, for which 
many efficient algorithms are known \cite{mt2013flow}. Another main point
in reliability theory is the effect of maintenance on the reliability of 
a system, considering pre-agreed strategies as well as adaptive strategies
\cite{gertsbakh2000reliability}.

The starting point of our analysis arises from the suggestion of L.~Gavrilov and 
N.~Gavrilova to explain the different observed failure rates of technological devices
(given by the Weibull power law $ \mu(t)=a t^b $ \cite{weibull1939statistical}) and 
biological systems (growing exponentially according to the Gompertz-Makeham 
law $\mu(t)=A+B e^{\lambda t}$ \cite{makeham1860law, gompertz1825nature}) by using
two different notions of redundancy, see Fig.~\ref{fig:repeaterchain}. In technical 
systems (e.g., an aircraft), each sub-component (e.g., the measurement devices of 
the air speed) appears in a fixed degree of redundancy and can be assumed to work
in the beginning due to supervision and testing. In biological systems, however, 
the sub-components (e.g., the organs) have a varying degree of redundancy, since 
some of their cells may not work right from the beginning. See Ref.~\cite{steinsaltz2004markov} 
for a criticism of the conclusions from these models in Refs.~\cite{gavrilov1991quantitative, Gavrilov2001}.

\section{Aging theory for chains and networks}

\begin{figure}[t]
\includegraphics[width=\columnwidth]{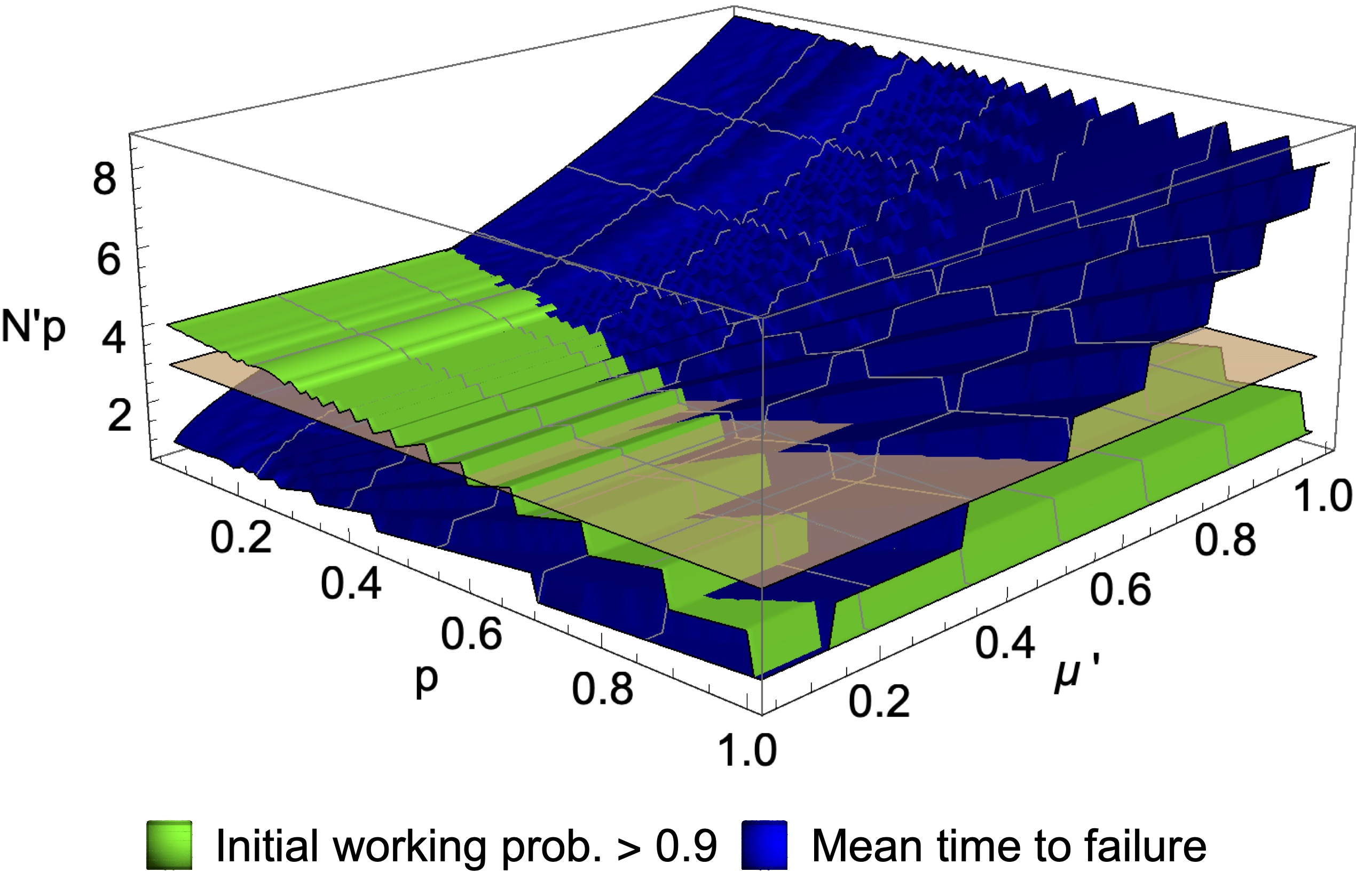}
\caption{Comparison between the implementations (a) and (b) [see Fig.~\ref{fig:repeaterchain}] with respect to two different figures of merit: Minimal multiplicity $N'p$ required in implementation (b) to reach the desired functionality in terms of life time and initial working probability as in implementation (a) with $N=3$ and $\mu=1/2$. The constant red half-transparent surface is given by $N'p=3$. Note that the green surface given by the initial condition is constant with respect to $\mu'$. The sawtooth-like pattern of steps is a direct consequence of the fact that $N'$ is an integer number.
See text for more details.
\label{fig:chainresult}
}
\end{figure}

The methods from Ref.~\cite{Gavrilov2001} can directly be extended to study 
the repeater chain. As implementation (a), corresponding to 
Fig.~\ref{fig:repeaterchain}(a), we 
consider $M$ edges with multiplicity $N$, working perfectly in the beginning, 
but affected by some failure rate $\mu$, which describes the probability 
of a single edge to break down in the next time interval.
As implementation (b), corresponding to Fig.~\ref{fig:repeaterchain}(b), we consider the $M$ edges 
connected with a higher multiplicity $N'>N$, where, however, with a certain 
probability $1-p$ each connection is not working in the beginning. In practice, this 
may be caused by a lower quality of the devices, which are still favourable 
due to a lower price. In this case, the devices are affected 
by some failure rate $\mu'$.
For these two implementations, we are interested in the total number of parallel 
connections between the end nodes; this so-called flux corresponds to the 
minimum of the working connections taken over all edges. The flux describes the capacity for communication from one point to the other; it is 
therefore a crucial quantity in the real implementation of a network to 
analyze, e.g.,
the transmission delays for quantum communication \cite{diadamo2022packet,mandil2023quantum}. Extending the 
methods from Ref.~\cite{Gavrilov2001}, any of the mentioned quantities can be 
determined fully analytically, details are given in Appendix~A. 

Naively, one may expect that the two implementations behave similarly, 
if $N'p =N$ and $\mu' = \mu$, that is, the  number of initially working 
connections in the second implementation equals {\it on average} the 
number in the first implementation. This is, however, not the case. 
As a concrete example, we consider implementation (a) with $M=6$, $N=3$ and 
$\mu=1/2$  and implementation (b) with
$M=6$, varying $p$ 
and varying $\mu'$. Then, we ask how large $N'$ in implementation (b) needs 
to be such that implementation (b) has the same mean time to failure
$\langle T \rangle$ as implementation (a).
This is depicted as blue surface in Fig.~\ref{fig:chainresult} 
and one finds that $N'p$ is typically significantly larger than $N$, even 
for $\mu'=\mu=1/2.$

Furthermore, in implementation (b) it is not guaranteed at all 
that the chain works from the beginning. So we ask (for the 
same parameters) how large $N'p$ needs to be to guarantee that 
implementation (b) has in the beginning a working probability
of at least $p_{\text{thres}}=0.9$ (green surface in 
Fig.~\ref{fig:chainresult}).
Clearly, the lower the probability 
that a single connection is functioning, the higher multiplicity 
is needed for an initially working chain. Interestingly, for 
much lower failure rates $\mu'<\mu$ this is also the determining 
factor for $N'$, if both conditions shall be satisfied.
Hence we arrive at two different effects, relevant in different 
parameter regimes, that can be useful in determining which devices 
to consider when building a new network. 
Cheaper and imperfect 
devices with the same failure rate may seem tempting if the price 
of $N'=N/p$ devices compared to the price of $N$ ``perfect'' devices 
is lower. However, as derived above, one actually needs a much 
higher number of cheaper devices than one would naively think.

While the repeater chain may be considered as a simple toy 
model, future quantum networks will have complex 
topologies \cite{rabbie2022designing}. Similarly, the components 
of other quantum technological devices are likely to exhibit 
more complicated functional dependencies than the ones underlying 
the repeater chain. In order to demonstrate that our methods 
are capable of dealing with these, we consider a network as in 
Fig.~\ref{fig:simple} (b), which can be seen as a structural approximation
of an optimized network under realistic conditions \cite{rabbie2022designing}. 
In Appendix~B we derive a formalism to deal with this type of topologies 
using the method of indicator functions instead of probabilities.

\begin{figure}[t]
\includegraphics[width=0.9\columnwidth]{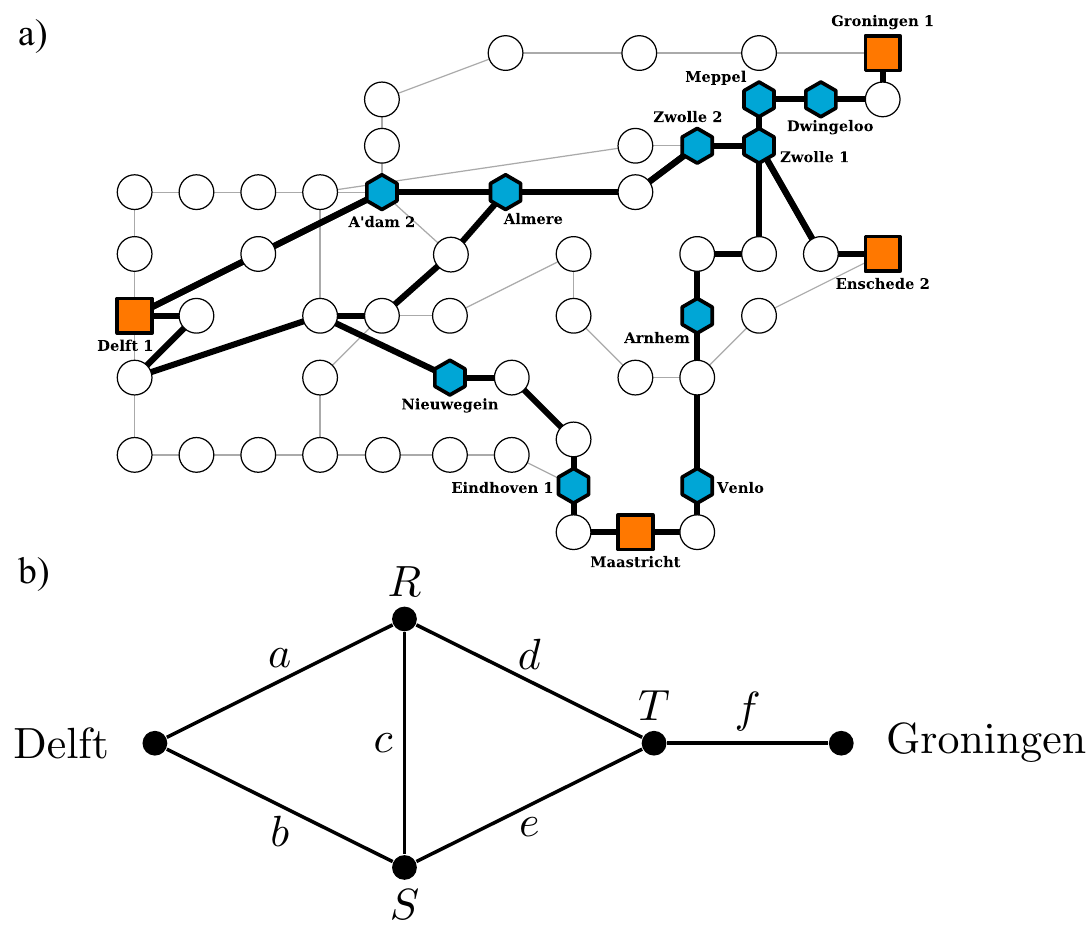}
\caption{(a) Optimized network topology for a network relying on 
the the Dutch telecom infrastructure (figure taken from Ref.~\cite{rabbie2022designing}). (b) Structural approximation of 
this network for a connection from Delft to Groningen. This network 
consists of five nodes and six connections.
\label{fig:simple}
}
\end{figure}


We now demonstrate in some detail the methods described above on two different examples, one of them being the topology of Fig.~\ref{fig:simple}.
To summarize briefly, we saw that there is a difference between the following two models:
\begin{itemize}
    \item[(a)] Each edge has a multiplicity of $N$, where every physical connection works perfectly in the beginning; however, the connections are affected by an exponential decay with failure rate $\mu$.
    \item[(b)] Each edge has a multiplicity of $N'$, where each physical connection is not working in the beginning with a certain probability $1-p$; the connections are also affected by an exponential decay with failure rate $\mu'$.
\end{itemize}


\begin{figure}
    \centering
    \includegraphics[width=0.9\linewidth]{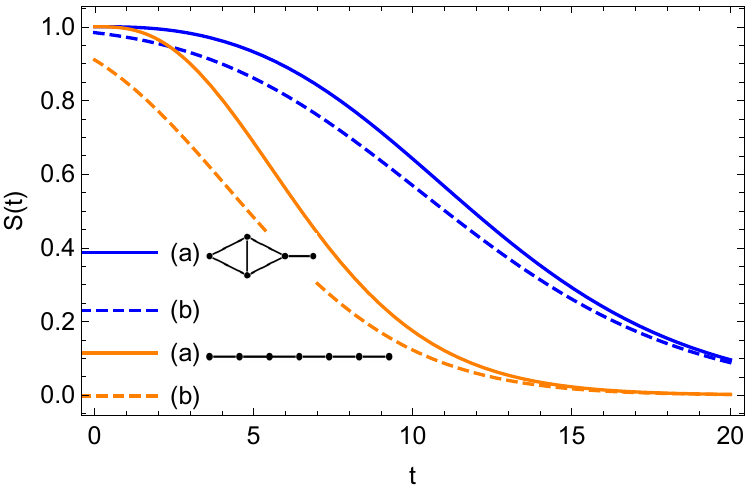}
    \includegraphics[width=0.9\linewidth]{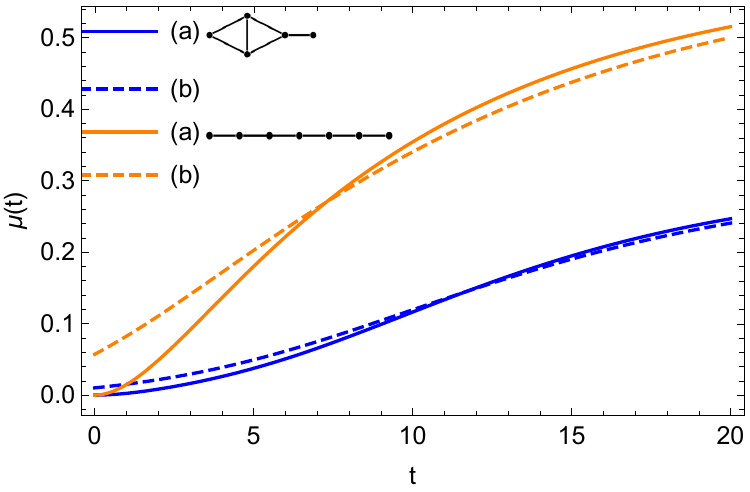}
\caption{Reliability function $S(t)$ and failure rate $\mu(t)$ of a network (blue or dark gray) or chain (orange or light gray) depending on the time $t$, both consisting of $6$ edges; the solid lines denote the results for model (a), where each edge has a multiplicity of $N=3$; the dashed lines denote the results for model (b), where each edge has a multiplicity of $N'=6$ and an initial failure probability of $1-p=1/2$.
For the description of the models see Fig.~\ref{fig:repeaterchain} and Sec.~IV.
\label{fig:rel_results}}
\end{figure}

In Fig.~\ref{fig:rel_results} we plot now the two resulting reliability functions and failure rates for these two models and two different topologies. The first topology is again a chain consisting of $M=6$ edges, compared to the simplified topology of the network in Fig.~\ref{fig:simple}, which also consists of $6$ edges. The results for the network are obtained using indicator functions as described in Appendix~B. There are several observations. First, the reliability functions according to model (b) have a negative offset at $t=0$. This should be expected, as there is a nonzero probability that the whole system is not functional right from the start as each connection is only functional with probability $p$. However, the offset is smaller in the case of the network since there are more possible end-to-end paths. Analogously, the failure rates for model (b) do not start at $0$ for $t=0$. A second observation shows that the two models behave the same in the long run. 
If the individual connections start to fail, then at some point one cannot distinguish anymore between an initially broken connection and a connection which just broke in the last few seconds.
However, in the short time range model (a) performs better. The last observation concerns the different topologies. It is not surprising that the network shows a more stable behavior, that is, a slower decreasing reliability and a slower increasing failure rate. The chain will fail as soon as one of the edges fails, the network however is robust to the failure of up to three edges as long as there exits still a path from end to end.

\section{Repairing components in networks}

Here, we consider multiplexed 
networks or repeater chains, where the physical connections 
may break down according to the models discussed above, but 
broken connections can be repaired. We assume that the 
repairing process takes several time steps. 

Several questions can be asked in this situation. The first one 
concerns the probability that, at a given point in time, the 
network is functional. 
Second, one can ask for the probability that 
the entire network is broken for several consecutive time steps. 
Formally, this is related to the calculation of waiting times
for entanglement generation in networks with finite memory 
\cite{Collins2007, praxmeyer2013reposition} and is challenging 
due to the complicated temporal dependencies of the model. In the following, 
however, we will develop a method to tackle this.

We use for our calculations the following discrete instead of continuous model: Each connection 
can break in a single time step with constant probability $p\down$, so 
the breaking of the connection is geometrically distributed, which can be seen as the
discrete analogue of the exponential distribution with the mean value $1/p\down$.
If the connection now breaks, it remains non-functional for exactly 
$\tau$ time steps, where $\tau$ is a constant set in advance. 
After this, it is functional with probability 
$(1-p\down)$, but with probability $p\down$ it breaks directly 
again, leaving it broken for at least $2\tau$ consecutive time 
steps. 

Since the expectation value of a geometric distribution is given 
by $1/ p\down$, the connection is on average functional for 
$1/p\down -1$ time steps before breaking. The connection is 
thus broken on $\tau$ out of on average $1/p\down -1 + \tau$ 
time steps, so the average probability of a single connection to be broken
is given by $\tau/( 1/p\down -1 +\tau)$. For a repeater chain 
with $M$ edges of multiplicity $N$ this leads to an average 
working probability of
\begin{align}
q =  
\Big[1 -\Big(\frac{\tau}{\frac{1}{p_{\downarrow}} -1 + \tau}\Big)^N \Big]^M. 
\end{align}

By using more subtle counting arguments we can also calculate 
the probability that a chain or network displays a certain 
behavior in $t$ consecutive, but arbitrarily chosen time 
steps. For the exact calculations we refer to Appendix~C. 
Our results can be used to characterize the time dependencies
of the system and, in particular, to calculate the probability 
for a given network to be functional conditioned on the time step(s)
before.

\section{Correlation measures}

Here, we want to give an example how the results described above can be used to analyze the temporal correlations in various network structures. In Fig.~\ref{fig:repairingresult} we present in detail
the behavior in up to three consecutive time steps of 
a multiplexed network and a multiplexed repeater 
chain with repairing. Here, the chain consists again of $M=6$ edges, and for both 
topologies we have a multiplicity of $N=3$ and a repairing time 
of $\tau=7$. 
The total state of each of the networks can assume two values, $S=0$ for ``not working'' and $S=1$ for ``working''. The analytical results (see Appendix~C) for this random variable allow to calculate various correlations at different times. 

We first consider the normalized temporal correlation 
\begin{align}
    {\rm Cor}(t_a, t_b)= \frac{\langle{S(t_a)S(t_b)}\rangle - \langle{S(t_a)}\rangle \langle{S(t_b)}\rangle}{\sigma(S(t_a))\sigma(S(t_b))}
\end{align} for two consecutive 
time steps ($t_1, t_2$) or with one time step in between ($t_1, t_3$); 
where $\sigma(S(t_x))$ denotes the standard deviation.

\begin{figure}
\includegraphics[width=0.9\linewidth]{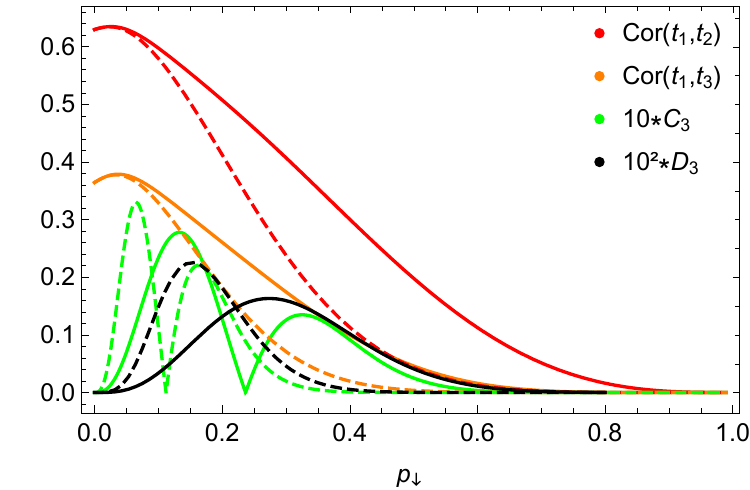}
\caption{Four different correlation measures demonstrating 
the dependencies in the temporal behaviour of two network 
models. Solid lines show the results for the network in 
Fig.~\ref{fig:simple}, the dashed lines for a repeater chain
similar to Fig.~\ref{fig:repeaterchain}(a) with $M=6$ connections. Both network structures have a multiplicity of $N=3$ and a repairing time of $\tau=7$.
\label{fig:repairingresult}
}
\end{figure}

Then, we consider two established measures of genuine tripartite correlation: 
The first one is the joint cumulant $C_3$ \cite{zhou2006cumulant}, which for three random
variables $A,B,C$ is defined as 
\begin{align*}
C_3 &= |\langle A B C\rangle 
-
\langle A \rangle \langle B C\rangle -
\langle B \rangle \langle A  C\rangle -
\langle C\rangle \langle A B \rangle \\
& \quad +
2 \langle A \rangle \langle B \rangle \langle C\rangle| .
\end{align*}
If the joint cumulant is nonzero, then none of the three random variables is 
independent from the other two. However, the other direction does not always hold.

The second measure $D_3$ of tripartite
correlations is based on exponential families and an extension of
the multi-information, which is used 
in the analysis of complex systems \cite{PhysRevE.79.026201, PhysRevE.85.046209, ay2017information}.  The Kullback-Leibler 
distance
\begin{align}
    D( P || Q) = \sum_{j} p_j \log({p_j}/{q_j})
\end{align}
describes how surprising the probability distribution $P$ is, 
if one has expected the distribution $Q$. Then, the correlation measure $D_3$
of a three-variable distribution is given by the minimal distance 
to all probability distributions $Q \in \mathcal{E}_2$ which 
are thermal states of two-body Hamiltonians, that is
\begin{align*}
D_3(P) = \inf_{Q\in\mathcal{E}_2} D(P||Q) .
\end{align*}
This describes genuine three-party correlations and is used 
in the analysis of complex systems \cite{PhysRevE.79.026201}.

Looking at the results in Fig.~\ref{fig:repairingresult} the 
first observation is that the repeater chain shows high time 
dependencies for smaller breaking probabilities $p\down$ than
the network. This is somewhat intuitive: If a complete failure 
of a system rarely but at least sometimes happens, then it 
is more probable that the system is broken for a few time steps 
in sequence than in randomly chosen time steps, leading to high 
time dependencies. Since the network is more robust to breaking, 
this behavior occurs for the network for higher breaking probabilities.
Second, the joint cumulant $C_3$ tends to zero if the 
system is equally often broken as functional. Around this 
point the correlation measure $D_3$ reaches its maximum. 
Third, every correlation measure tends to $0$ for high $p\down$, 
since in that case the systems remain broken most of the time. 

\section{Applications in entanglement distribution}

\begin{figure*}[t]
    \centering
    \includegraphics[height=5.1cm]{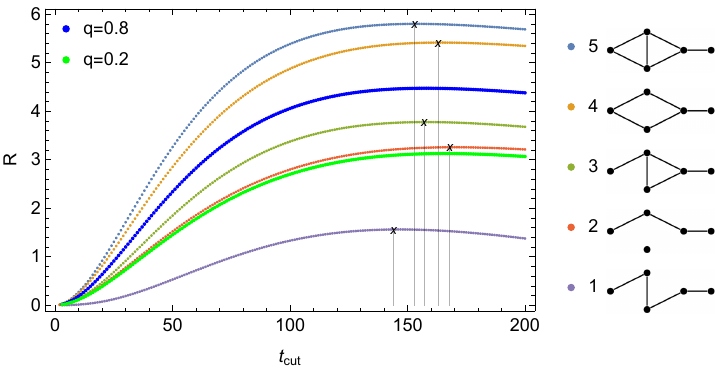}
    \includegraphics[height=5cm]{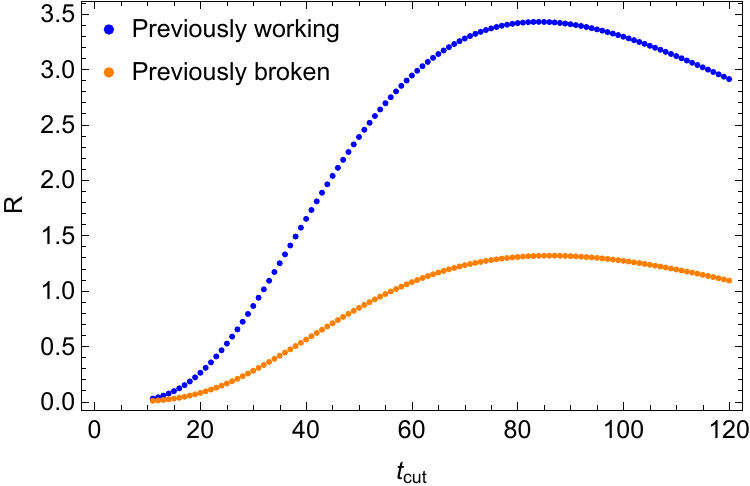}
    \caption{Secret-key rate (bits per second) in the simplified Netherlands network with $N=1$ (left) 
    and on a chain with $M=6$ edges with $N=3$ (right), depending on the chosen cut-off time. 
    See text for more details.
    \label{fig:rates}
    }
    %
    %
\end{figure*}

Now we want to use our results to derive 
improvements for quantum protocols run on these setups. 
We consider entanglement generation and subsequent quantum key distribution
with quantum repeaters as described above.
One problem in these schemes is that the quality of existing 
entangled links decreases when stored, due to decoherence. These weakly 
entangled links do not lead to high-fidelity long-range entanglement anymore
and block the generation of fresh entangled pairs, so it may be useful 
to restrict the duration of an entanglement generation attempt by a cut-off
time and erase all created links after this time~\cite{Collins2007,rozpedek2018parameters}. 
The cut-off time should be chosen in such a way that the secret-key rate achieved 
by this protocol is optimal~\cite{rozpedek2018near-term,kamin2022exact,li2021}.

We aim to find a strategy for optimizing the cutoff time for the network 
in Fig.~\ref{fig:simple} without multiplexing and for a chain of $M=6$ edges 
with multiplicity $N=3$. We assume two different
time scales; on the longer time scale devices break and are repaired, on top
of that and on a much shorter timescale 
entanglement is probabilistically distributed according to the protocol above. 
For the secret-key rates we now simulate the repeater protocol, assuming perfect Bell 
measurements and a entanglement generation probability of $P_{\text{gen}}=0.01$. 
Decoherence is modeled as depolarizing noise with a coherence time of $T_{\text{coh}} = 1s$ and the duration of one time step for the quantum protocol is $\frac{2}{3} 10^{-3} s$. For a motivation of 
these numbers and more details, see Ref.~\cite{avis2023asymmetric} and Appendix~D.

The results of our simulations are shown in Fig.~\ref{fig:rates}. In the left 
part the thin lines denote the secret-key rates achieved on the five different configurations of the Netherlands network 
depending on the chosen cut-off time. 
Assuming that a single edge of the network has an average working probability of 
$q=0.8$ or $q=0.2$ one can calculate the probabilities for the network to 
be in a certain configuration and from this the average secret-key rate achieved on the network. Clearly, these average key rates are 
lower than the one achieved on the complete network, so the 
classical defects of the technical devices have a significant impact 
on the quantum efficiency. A second observation is that the secret-key 
rate reaches a different maximum for each configuration, with the largest 
distance between two maxima given by 
$
\Delta t_{\text{cut}} = t_{\text{cut},2} - t_{\text{cut},1} = 168 - 144 = 24
$, 
corresponding to a relative change from configuration~$1$ to configuration~$2$
of around $17 \%$.
So, knowing the characteristics of the technical devices leads to a better 
choice of the cut-off time.

In the same way as above, we simulated the achievable key rates on a repeater chain with $M=6$ edges with multiplicity $N=3$. For the breaking probability we chose $p\down= 1/105$ and for the repairing time $s=15$. The resulting average key rates can be seen in the right part of Fig.~\ref{fig:rates}. Here, we took the state of the repeater chain in the previous time step into consideration and conditioned the probabilities for the different functional configurations on whether the repeater chain was functional or broken in the previous step.  One can see a clear difference between the key rates in these two different cases, which is to be expected. If the chain was functional in the time step before, then the configurations with a higher multiplicity in the edges are more probable than if the chain was broken in the previous step. So the knowledge about the previous behavior of the repeater chain can be used to choose an appropriate cut-off time for the current experiment and the breaking effects of the used technical devices can be mitigated by adapting the cut-off time of the quantum protocol to the current situation.

\section{Conclusion}

We developed analytical methods to treat failure and defects of classical
devices in quantum networks and which can be used to study temporal
correlations arising from failure-and-repair mechanisms. Based on our results, 
we showed that these classical effects have an impact on the performed quantum 
protocol and give rise to new criteria for an optimal cut-off time
in repeater protocols.
Our methods are also directly applicable to recent experimental implementations. One example is the four party network implemented in \cite{wengerowsky2018entanglement}, 
where different errors in the state generation or in the experimental 
equipment can directly be translated to failures of nodes or edges in our 
framework. Our methodology can also be applied to give additional 
constraints for developing new repeater architectures as in \cite{gu2024hybrid},
where a potentially multiplexed repeater chain is considered.

While our methods were developed in the quantum network paradigm, 
it seems promising to adapt them to other examples of quantum 
technologies. A concrete example are segmented ion traps 
 \cite{Hensinger_2006,Lekitsch_2017}. Here, 
ions are shuttled on a chip from one interaction zone to the other to
build quantum circuits with high fidelity. Failures of interaction
or shuttling procedures can directly be modeled with our approach 
and consequently the design of these ion traps can be optimized. On 
a more fundamental level, our results can be used to optimize 
entanglement distribution in networks \cite{bugalho2023distributing,Hansenne_2022} as well as 
multipartite cryptographic protocols \cite{Murta2020}. Finally, it would be interesting to exploit
the further results and methods from reliability theory \cite{mt2013flow, gertsbakh2000reliability,yeh2010particle, yeh2007interactive} to 
today's quantum technologies.

\section*{Acknowledgments}

We thank Jan L.~B\"onsel, Austin Collins, Antariksha Das, Sophie Egelhaaf,
Kian van der Enden, Tim Hebenstreit, 
Patrick Huber, Brian Kennedy, Peter van Loock,
and Fabian Zickgraf for discussions. This 
work was supported by the Deutsche Forschungsgemeinschaft  (DFG, German Research 
Foundation, project numbers 447948357 and 440958198), the 
Sino-German Center for Research Promotion (Project M-0294), 
the ERC (Consolidator Grant 683107/TempoQ), the German 
Ministry of Education and Research (Project QuKuK, BMBF Grant 
No. 16KIS1618K) and the Dutch National Growth Fund, as part of of the Quantum Delta NL program. L.V. thanks the Stiftung der Deutschen Wirtschaft and L.T.W. thanks the House of Young Talents of the University of Siegen.


\appendix

\section{Reliability Theory for Repeater Chains } 

In this Appendix we want to describe how one can apply reliability theory to repeater chains. First, we give a short introduction to the basics of reliability theory and the main quantities. Second, we derive some results for the easiest network structure, the repeater chain. We start by considering a single connection and then an edge or block, which consists of $N$ parallel connections. Here we differentiate between a block consisting of $N$ initially perfect connections and $N$ connections which may be broken in the beginning with probability $1-p$. Finally, these blocks of connections are combined to a chain. All the results are purely analytical. 

\subsection{Basics of Reliability Theory}

The \textit{reliability function} $S(t)$ describes the probability that a system fails after a certain time $t$ and is functional up to this point. If $T$ is the random variable describing the failure time of the system then
\begin{align}
	S(t) = \pr{T > t} .
\end{align}
Using the \textit{cumulative distribution function} 
\begin{align}
	F(t) = \pr{T \leq t}
\end{align}
one can express the reliability function as
\begin{align}
	S(t) = 1 - \pr{T \leq t} = 1 - F(t).
\end{align}
The \textit{failure rate} of the system is given by the logarithmic derivative of the reliability function
\begin{align}
	\mu(t) = - \frac{\dd}{\dd t} \ln{S(t)} = -\frac{1}{S(t)} \frac{\dd}{\dd t} S(t) = \frac{1}{S(t)} \frac{\dd}{\dd t} F(t). 
\end{align}
Sometimes it is easier to express the failure rate using the \textit{probability density function} 
\begin{align}
	f(t) = \frac{\dd}{\dd t} F(t) = - \frac{\dd}{\dd t} S(t).
\end{align}
It then holds
\begin{align}
	\mu(t) =  \frac{f(t)}{S(t)}.
\end{align}

\subsection{Results for Repeater Chains}

One block consists of a fixed number of $N$ parallel connections. Each of these connections can be functional or broken. The block has a \textit{flux} of $\flux$, if at least $\flux$ of the $N$ connections are functional. We want to calculate the probability to have a certain flux $\flux$ depending on the time $t$.
The calculation follows the strategy of Gavrilov and Gavrilova \cite{Gavrilov2001}, with some modifications to avoid the underlying implicit assumption of small time scales (see also Ref. \cite{steinsaltz2004markov}).
The important difference is that in their calculations a block fails if all connections are destroyed. In our calculation the block ``fails'', if $N-(\flux -1)$ connections are destroyed, because then a flux of $\flux$ is not possible anymore.

\subsubsection{A single connection}

A single connection has a fixed failure rate $\mu(t) = k= \mathrm{const}$, so it decays exponentially. Let $T$ be the random variable which describes the failure time of a single connection. For the exponential distribution it holds
\begin{align}
	S(1,k,t) := \pr{T>t} = 1- F(t) = e^{-kt} =: \alpha
\end{align}
and 
\begin{align}
	F(1,k,t) := \pr{T \leq t} = 1- e^{-kt}  = 1 - \alpha.
\end{align}

Note that it holds:
\begin{align}
	\frac{\dd \alpha}{\dd t}  = -k e^{-kt} = -k\alpha.
\end{align}

\subsubsection{A block with initially perfect connections}

One block consists of a fixed number of connections $N$. In the beginning all $N$ of these connections are functional.
We want to calculate the probability to have a certain flux $\flux$ depending on the time $t$.
\begin{proposition}\label{pro:blockPerfRel}
    Given a block of $N$ working connections which each fail according to a constant failure rate $\mu(t)=k$ the reliability function of the block $b$ for a flux $\flux$ at time $t$ is given by 
    \begin{align} 
    S_b^{\flux} (N,k,t) = \sum_{i=0}^{N-\flux} {N \choose i} \left[ 1- \alpha \right]^i  \alpha^{N-i}
    \end{align}
    and the failure rate is given by 
    \begin{align}
    \mu_b^{\flux} (N,k,t) = \frac{ k \flux {N \choose \flux} \left[ 1- \alpha \right]^{N-f}  \alpha^{\flux} }{ \sum_{i=0}^{N-\flux} {N \choose i} \left[ 1- \alpha \right]^i  \alpha^{N-i} } 
    \end{align}
    with $\alpha = e^{-kt}$.
\end{proposition}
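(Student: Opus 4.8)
The plan is to treat the $N$ connections as independent two-state systems. By the single-connection result of Appendix~A2.1, each connection is functional at time $t$ with probability $\alpha = e^{-kt}$ and failed with probability $1-\alpha$, independently of the others, so the number of failed connections $X$ is binomially distributed, $X \sim \mathrm{Binomial}(N,\,1-\alpha)$. The block sustains a flux of at least $\flux$ precisely when at least $\flux$ connections still work, i.e.\ when at most $N-\flux$ have failed. Therefore $S_b^{\flux}(N,k,t) = \pr{X \le N-\flux}$, which expands directly into the stated tail sum $\sum_{i=0}^{N-\flux}\binom{N}{i}(1-\alpha)^i \alpha^{N-i}$. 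This establishes the reliability function with essentially no computation.

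For the failure rate I would invoke $\mu_b^{\flux} = -\partial_t \ln S_b^{\flux} = -(\partial_t S_b^{\flux})/S_b^{\flux}$ from Appendix~A1, so that the denominator is already the reliability function and only the numerator $-\partial_t S_b^{\flux}$ needs work. Writing $S_b^{\flux}$ as a function $g(\alpha)=\sum_{i=0}^{N-\flux}\binom{N}{i}(1-\alpha)^i\alpha^{N-i}$ and applying the chain rule with $\dd\alpha/\dd t = -k\alpha$, the task reduces to differentiating $g$ with respect to $\alpha$. The key observation is that this derivative telescopes: applying the product rule to each summand and rewriting the coefficients via $i\binom{N}{i}=N\binom{N-1}{i-1}$ and $(N-i)\binom{N}{i}=N\binom{N-1}{i}$, each term becomes a difference of consecutive quantities, so the sum collapses to the single boundary term $N\binom{N-1}{N-\flux}(1-\alpha)^{N-\flux}\alpha^{\flux-1}$.

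Multiplying by $-\dd\alpha/\dd t = k\alpha$ then yields $-\partial_t S_b^{\flux} = kN\binom{N-1}{N-\flux}(1-\alpha)^{N-\flux}\alpha^{\flux}$, and dividing by $S_b^{\flux}$ reproduces the claimed expression up to the prefactor. The final step is the binomial identity $\flux\binom{N}{\flux}=N\binom{N-1}{\flux-1}=N\binom{N-1}{N-\flux}$, which matches my prefactor with the $k\flux\binom{N}{\flux}$ in the statement. I expect the telescoping differentiation to be the main obstacle: the index bookkeeping in re-indexing the $N\binom{N-1}{i-1}$-type terms, and verifying that the lower boundary contribution (the $i=0$ term) vanishes because $\binom{N-1}{-1}=0$, require some care, whereas the binomial model and the closing coefficient identity are routine.
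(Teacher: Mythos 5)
Your proposal is correct and follows essentially the same route as the paper: the reliability function is obtained by binomial counting of failed connections (the paper phrases this via the failure-time CDF and then complements, which is equivalent to your direct tail-probability identification), and the failure rate follows from $\mu = -\partial_t S/S$ with the chain rule $\dd\alpha/\dd t = -k\alpha$ and a telescoping derivative, where your identities $i\binom{N}{i}=N\binom{N-1}{i-1}$, $(N-i)\binom{N}{i}=N\binom{N-1}{i}$ play the same role as the paper's $\binom{N}{i+1}(i+1)=\binom{N}{i}(N-i)$. All steps, including the boundary term $\binom{N-1}{-1}=0$ and the closing identity $\flux\binom{N}{\flux}=N\binom{N-1}{N-\flux}$, check out.
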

\begin{proof}
Denote the failure time of the block with $T_b$ and the failure time of the $i$-th connection with $T_i$. The failure times of the connections are iid, so the probability for the failure of $i$ arbitrary connections is the same as for the connections $1,\dots,i$. The block fails in time $T_b\leq t$ and has therefore a flux $\leq \flux-1$, if at least $N-(\flux-1)$ of the connections fail before $t$:
\begin{align}
	& F_b^{\flux} (N,k,t) = \pr{T_b \leq t} \notag \\
	& \quad = \sum_{i=N-(\flux-1)}^{N} {N \choose i} \pr{T_1,\dots, T_i \leq t} \notag \\
    & \quad \phantom{\pr{T_1,\dots, T_i \leq t}} \times \pr{T_{i+1},\dots,T_N > t}  \notag \\
	& \quad = \sum_{i=N-(\flux-1)}^{N} {N \choose i} \pr{T_1\leq t}\cdots \pr{T_i\leq t} \notag \\
    & \quad \phantom{\pr{T_1,\dots, T_i \leq t}} \times \pr{T_{i+1} > t}\cdots \pr{T_{N} > t} \notag \\
	& \quad = \sum_{i=N-(\flux-1)}^{N} {N \choose i} \left[ F(1,k,t) \right]^i  \left[ S(1,k,t) \right]^{N-i} \notag \\
	& \quad = \sum_{i=N-(\flux-1)}^{N} {N \choose i} \left[ 1- \alpha \right]^i \alpha^{N-i} \notag \\
    & \quad = \sum_{i=0}^{N} {N \choose i} \left[ 1- \alpha \right]^i  \alpha^{N-i} -  \sum_{i=0}^{N-\flux} {N \choose i} \left[ 1- \alpha \right]^i  \alpha^{N-i} \notag \\
    & \quad = 1 -  \sum_{i=0}^{N-\flux} {N \choose i} \left[ 1- \alpha \right]^i  \alpha^{N-i}.
\end{align}
The reliability function is then given by
\begin{align}
	& S_b^{\flux} (N,k,t) = \pr{T_b > t} = 1- F_b^{\flux} (N,k,t) \notag \\
    & \quad = \sum_{i=0}^{N-\flux} {N \choose i} \left[ 1- \alpha \right]^i  \alpha^{N-i}. 
\end{align}
For the failure rate we calculate first the probability density function
\begin{align}
	& f_b^{\flux} (N,k,t) = -\frac{\dd}{\dd t} S_b^{\flux} (N,k,t) \\
	& \quad = -\frac{\dd \alpha}{\dd t} \frac{\dd }{\dd \alpha} \sum_{i=0}^{N-\flux} {N \choose i} \left[ 1- \alpha \right]^i  \alpha^{N-i} \notag \\
	& \quad = k\alpha  \big[ \sum_{i=0}^{N-\flux} {N \choose i} \left[ 1- \alpha \right]^i  (N-i) \alpha^{N-i-1} \notag \\
    & \qquad + \sum_{i=1}^{N-\flux} {N \choose i} (-1)i \left[ 1- \alpha \right]^{i-1}  \alpha^{N-i} \big] \notag \\
    & \quad = k\alpha  \big[ \sum_{i=0}^{N-\flux} {N \choose i} (N-i) \left[ 1- \alpha \right]^i  \alpha^{N-i-1} \notag \\
    & \qquad - \sum_{i=0}^{N-\flux-1} {N \choose i+1} (i+1) \left[ 1- \alpha \right]^{i}  \alpha^{N-(i+1)} \big] .
\end{align}
Then it holds 
\begin{align}
    {N \choose i+1} & (i+1) = \frac{N!}{(i+1)!(N-i-1)!} (i+1) \notag \\
    &= \frac{N!}{i!(N-i)!} (N-i) = {N \choose i} (N-i)
\end{align}
and therefore
\begin{align}
    f_b^{\flux} (N,k,t) &= k\alpha  \big[ \sum_{i=0}^{N-\flux} {N \choose i} (N-i) \left[ 1- \alpha \right]^i  \alpha^{N-i-1} \notag \\
    &\qquad - \sum_{i=0}^{N-\flux-1} {N \choose i} (N-i) \left[ 1- \alpha \right]^{i}  \alpha^{N-(i+1)} \big] \notag \\
    &= k\alpha  {N \choose N-\flux} \left[ N-(N-\flux) \right] \left[ 1- \alpha \right]^{N-\flux}  \alpha^{\flux-1} \notag \\
    &= k \flux {N \choose \flux} \left[ 1- \alpha \right]^{N-\flux}  \alpha^{\flux} .
\end{align}
The failure rate is then the fraction of the probability density function and the reliability:
\begin{align}\label{eq:rate,det,f}
	\mu_b^{\flux} (N,k,t) &= \frac{ f_b^{\flux} (N,k,t) }{ S_b^{\flux} (N,k,t) } \notag \\
	&= \frac{ k \flux {N \choose \flux} \left[ 1- \alpha \right]^{N-\flux}  \alpha^{\flux} }{ \sum_{i=0}^{N-\flux} {N \choose i} \left[ 1- \alpha \right]^i  \alpha^{N-i} } \notag \\
	&= \frac{ k \flux {N \choose \flux} \left[ \alpha^{-1}-1 \right]^{N-\flux} }{ \sum_{i=0}^{N-\flux} {N \choose i} \left[ \alpha^{-1} -1 \right]^i  }.
\end{align}
\end{proof}

\subsubsection{A block with probabilistic connections}

In this model one block consists of a fixed number of connections $N$. In the beginning, only $n$ of these connections are functional. The number $n$ depends on some probability distribution $q_n$ with $\sum_{n=0}^N q_n = 1$, e.~g., the binomial distribution $q_n = {N \choose n} p^n (1-p)^{N-n}$. The binomial distribution describes the case where each single connection is functional with probability $p$. In general the distribution $q_n$ can be arbitrary.

We want to calculate the probability to have a certain flux $\flux$ depending on the time $t$.

\begin{proposition}\label{pro:blockProbRel}
     Given a block of $N$ connections of which $n$ connections are working in the beginning with probability $q_n$ and denote the probability distribution by $Q=\{q_n\}$. Each working connection fails according to a constant failure rate $\mu(t)=k$. Then the reliability function of the block for a flux $\flux$ at time $t$ is given by
     \begin{align} 
        S_b^{\flux} (N, Q; k, t) = \sum_{n=\flux}^N q_n S_b^{\flux} (n,k,t) 
     \end{align}
     and the failure rate is given by 
     \begin{align}
        \mu_b^{\flux} (N, Q; k, t) = \frac{1}{S_b^{\flux} (N, Q; k, t)} \sum_{n=\flux}^N q_n \mu_b^{\flux} (n,k,t) S_b^{\flux}(n,k,t) .
     \end{align}
\end{proposition}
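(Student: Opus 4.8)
The plan is to reduce both claims to Proposition~\ref{pro:blockPerfRel} by conditioning on the random number $n$ of connections that happen to be functional at $t=0$. First I would apply the law of total probability, decomposing the survival event $\{T_b > t\}$ according to the value of $n$:
\begin{align}
S_b^{\flux}(N,Q;k,t) = \pr{T_b > t} = \sum_{n=0}^{N} q_n \, \pr{T_b > t \mid n \text{ working initially}}.
\end{align}
The decisive observation is that, conditioned on exactly $n$ connections being functional at the start (with the remaining $N-n$ permanently absent), the block evolves precisely like a block of $n$ \emph{initially perfect} connections, each decaying with the same constant rate $k$. Hence the conditional reliability is exactly $S_b^{\flux}(n,k,t)$ as computed in Proposition~\ref{pro:blockPerfRel}, giving $S_b^{\flux}(N,Q;k,t) = \sum_{n=0}^{N} q_n S_b^{\flux}(n,k,t)$.

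Next I would trim the summation range. For $n<\flux$ the target flux cannot be met even at $t=0$, so the corresponding conditional reliability vanishes; this is visible directly in the formula of Proposition~\ref{pro:blockPerfRel}, whose defining sum $\sum_{i=0}^{n-\flux}$ is empty when $n<\flux$. Dropping these terms yields the stated range $n=\flux,\dots,N$ and completes the first claim.

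For the failure rate I would use the relations $\mu = f/S$ and $f = -\frac{\dd}{\dd t} S$ established in Appendix~A1. Since the reliability is a \emph{finite} sum, I can differentiate termwise to get $f_b^{\flux}(N,Q;k,t) = \sum_{n=\flux}^{N} q_n f_b^{\flux}(n,k,t)$, and substituting the single-block identity $f_b^{\flux}(n,k,t) = \mu_b^{\flux}(n,k,t)\,S_b^{\flux}(n,k,t)$ from Proposition~\ref{pro:blockPerfRel} and dividing by $S_b^{\flux}(N,Q;k,t)$ produces exactly the claimed $S$-weighted average of the component failure rates. The only step demanding care is the conditioning reduction itself: I must argue that, for the purpose of the flux at later times, a connection dead from the outset is indistinguishable from one that was never present, so that the surviving subensemble is genuinely a size-$n$ perfect block. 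Once this is granted, the remainder is just linearity of expectation and of differentiation, so no real obstacle remains.
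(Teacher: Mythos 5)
Your proposal is correct and follows essentially the same route as the paper: the reliability is obtained by the law of total probability conditioning on the initial number $n$ of working connections (with terms $n<\flux$ vanishing), and the failure rate follows by termwise differentiation of the finite sum combined with $\mu_b^{\flux}(n,k,t)\,S_b^{\flux}(n,k,t) = -\frac{\dd}{\dd t}S_b^{\flux}(n,k,t)$. Your write-up is merely more explicit than the paper's in justifying the conditioning reduction and the truncation of the summation range.
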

\begin{proof}
    Using the law of total probability it holds for the reliability
    \begin{align}
        S_b^{\flux} (N, Q; k, t) = \pr{T_b > t} = \sum_{n=\flux}^N q_n S_b^{\flux} (n,k,t).
    \end{align}
    For the failure rate it then follows
    \begin{align}
        & \mu_b^{\flux} (N, Q; k, t) = -\frac{1}{S_b^{\flux} (N, Q; k, t)} \frac{\dd}{\dd t} \sum_{n=\flux}^N q_n S_b^{\flux} (n,k,t) \notag \\
        & \quad = \frac{1}{S_b^{\flux} (N, Q; k, t)}  \sum_{n=\flux}^N q_n S_b^{\flux} (n,k,t) \notag \\
        & \quad \qquad \times \left[ -\frac{1}{S_b^{\flux} (n,k,t) }\frac{\dd}{\dd t} S_b^{\flux} (n,k,t) \right] \notag \\
        & \quad = \frac{1}{S_b^{\flux} (N, Q; k, t)}  \sum_{n=\flux}^N q_n S_b^{\flux} (n,k,t) \mu_b^{\flux} (n,k,t) .
    \end{align}
\end{proof}
In Ref. \cite{Gavrilov2001} they use at this point implicitly an approximation for small time scales by calculating the failure rate for a flux of $\flux =1$ without the weights $\frac{S_b^{1} (n,k,t)}{S_b^{1} (N, Q; k, t)}$:
\begin{align}
    \mu_b^{1} (N, Q; k, t) = \sum_{n=1}^N \tilde{q}_n \mu_b^{1} (n,k,t) .
\end{align}
The probabilities $\tilde{q}_n$ are normalized, such that the probability $\tilde{q}_0$ for a failure right in the beginning is $0$. In this case, the reliabilities $S_b^{1} (N, Q; k, 0) = 1$ and $S_b^{1} (n,k,0) = 1$ cancel for $t=0$ (since the device is surely functional in the beginning). However, for larger times $t$ the weights become relevant and lead to a very different behavior than in Ref. \cite{Gavrilov2001}.

\subsubsection{A chain}

\begin{proposition}\label{pro:chain}
    Given a chain of $M$ blocks with respective reliability functions $S_{b_j}(t)$ and failure rates $\mu_{b_j}(t)$, $j=1,\dots,M$, then the reliability function and failure rate of the whole chain are given by the product or sum, respectively:
    \begin{align}
        S_\text{chain} = \prod_{j=1}^M S_{b_j} (t) \quad \text{and}\quad \mu_\text{chain} (t) = \sum_{j=1}^M \mu_{b_j} (t) .
    \end{align}
\end{proposition}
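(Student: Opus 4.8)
The plan is to exploit the defining feature of a chain (series) configuration: the chain transmits the required flux if and only if \emph{every} one of its $M$ blocks does, so the chain fails at the instant the first block fails. Writing $T_\text{chain}$ and $T_{b_j}$ for the respective failure times, this structural fact is captured by the identity $T_\text{chain} = \min_j T_{b_j}$. Everything else follows from this observation together with one independence assumption.

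First I would translate the minimum into the reliability function. Since the event $\{T_\text{chain} > t\}$ coincides with $\{T_{b_j} > t \text{ for all } j\}$, we have
\begin{align}
S_\text{chain}(t) = \pr{T_\text{chain} > t} = \pr{T_{b_1} > t, \dots, T_{b_M} > t}.
\end{align}
The key step is then to factorize this joint survival probability. Because distinct blocks are built from physically disjoint sets of connections, their failure times are mutually independent, so the joint probability splits into the product $\prod_{j=1}^M \pr{T_{b_j} > t} = \prod_{j=1}^M S_{b_j}(t)$, which is the first claimed identity.

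For the failure rate I would simply invoke its definition as the negative logarithmic derivative of the reliability, as in \cref{eq:failurerate}, and use that the logarithm converts the product into a sum:
\begin{align}
\mu_\text{chain}(t) = -\frac{\dd}{\dd t} \ln S_\text{chain}(t) = -\frac{\dd}{\dd t} \sum_{j=1}^M \ln S_{b_j}(t) = \sum_{j=1}^M \mu_{b_j}(t),
\end{align}
where the last equality re-uses the definition $\mu_{b_j}(t) = -\tfrac{\dd}{\dd t}\ln S_{b_j}(t)$ for each individual block. Note this part is a purely formal manipulation once the product form of $S_\text{chain}$ is established, and it requires no further hypotheses (in particular, no small-time approximation of the kind criticized earlier for Ref.~\cite{Gavrilov2001}).

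The main obstacle is therefore not a calculation but an assumption: the factorization in the second step rests entirely on the statistical independence of the block failure times. I would make this explicit and justify it physically, noting that each block corresponds to a separate edge with its own dedicated connections, which break according to independent failure processes, and no connection is shared between two edges of the chain. Once independence is granted, both identities follow immediately; conversely, if correlated failures between edges were present (e.g. a shared cryostat), the product rule would need to be replaced by the full joint survival probability.
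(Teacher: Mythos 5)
Your proof is correct and follows essentially the same route as the paper's: both identify that the chain survives iff every block survives, factorize the joint survival probability using independence of the blocks, and obtain the failure-rate sum from the logarithmic derivative of the product. Your explicit statement of the independence assumption (which the paper also makes, noting the blocks are ``independent, but not necessarily identically distributed'') and the $T_\text{chain} = \min_j T_{b_j}$ formulation are nice clarifications but do not change the argument.
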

\begin{proof}
Let $T_\text{chain}$  be the failure time of the chain and $T_{b_j}$ the failure time of block $j=1,\dots,M$. The failure times of the blocks are independent, but not necessarily identically distributed. The whole chain fails if at least one of the blocks fails. Therefore it holds
\begin{align}
	S_\text{chain} (t) &= \pr{T_\text{chain} > t} = \pr{T_{b_1},\dots, T_{b_M} > t} \notag \\
	&= \prod_{j=1}^M \pr{T_{b_j} > t} = \prod_{j=1}^M S_{b_j} (t)  
\end{align}
and
\begin{align}
	\mu_\text{chain} (t) &= - \frac{\dd}{\dd t} \ln{S_\text{chain} (t)}
	= - \frac{\dd}{\dd t} \ln{\prod_{j=1}^M S_{b_j} (t) } \notag \\
	&= \sum_{j=1}^M \left( - \frac{\dd}{\dd t} \ln{ S_{b_j} (t) } \right)
	= \sum_{j=1}^M \mu_{b_j} (t) .
\end{align}
\end{proof}

If the blocks are all iid, then it simply holds
\begin{align}
	S_\text{chain} (t) = \left[ S_{b} (t) \right]^M  
\end{align}
and 
\begin{align}
	\mu_\text{chain} (t) = M \mu_{b} (t) .
\end{align}


\section{Probabilities in more complex topologies}\label{ch:repairing}

To calculate the probabilities for a network to be functional or not, one has to consider the different possible paths in this network that lead to a resulting connection from one point of the network to another. This type of probabilities can e.g. be computed using the principle of inclusion and exclusion. Another easier and more clear option is the use of indicator functions. This idea can,
e.g., be found in Ref.~\cite{engel1973wahrscheinlichkeitsrechnung}.
We use indicator functions of the type
\begin{align}
    \id_X = \begin{cases}
        1, & \text{if X is functional} ,\\
        0, & \text{if X is broken} .
    \end{cases}.
\end{align}
Then it holds 
\begin{align}
    \mathbb{E}[ \id_X ]  &= 1 \times \pr{X\text{ is functional}} + 0\times \pr{X\text{ is broken}} \notag \\
    &= \pr{X\text{ is functional}}.
\end{align}
So to calculate the probability for the whole system to be functional we first calculate the indicator function and then take the average value.

If $A$ and $B$ are two components, which are connected in series to get the larger component $X$, then $X$ is functional iff $A$ and $B$ are functional. The indicator function of $X$ then reads as
\begin{align}
    \id_X &= \begin{cases}
        1, & \text{if } \id_A = \id_B =1 ,\\
        0, & \text{else}
    \end{cases} \notag \\ 
    &= \id_A \times \id_B
\end{align}
and the probability is therefore (since $A$ and $B$ are independent)
\begin{align}
    & \pr{X\text{ is functional}} = \mathbb{E}[ \id_X ] = \mathbb{E}[ \id_A ]\times \mathbb{E}[ \id_B ] \notag \\
    & \quad = \pr{A \text{ is functional}}\times \pr{B\text{ is functional}}.
\end{align}

If the components are connected in parallel, the indicator function of $X$ instead reads
\begin{align}
    \id_X &= \begin{cases}
        1, & \text{if } \id_A =1 \text{ or } \id_B =1 ,\\
        0, & \text{else}
    \end{cases} \notag \\
    &= 1- (1-\id_A) \times (1-\id_B)
\end{align}
and the probability is (since $A$ and $B$ are independent)
\begin{align}
    & \pr{X\text{ is functional}} = \mathbb{E}[ \id_X ] \notag \\
    & \quad = 1- (1-\mathbb{E}[ \id_A ])\times (1-\mathbb{E}[ \id_B ]) \notag \\
    & \quad = 1- \pr{A \text{ is broken}}\times \pr{B\text{ is broken}}.
\end{align}

The idea to use indicator functions instead of probabilities becomes more interesting when considering more difficult networks. 
\begin{figure}
    \centering
    \includegraphics[width=0.9\columnwidth]{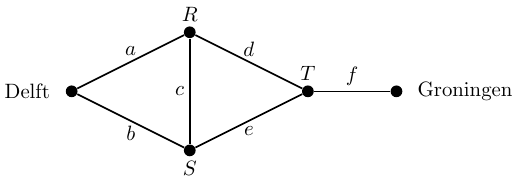}
    \caption{Simplification of the network in Ref. \cite{rabbie2022designing} from Delft to Groningen; the network consists of three components in series: a square network (with one diagonal connection $c$) from Delft up to node $T$, the node $T$ itself and a single connection $f$ to Groningen.}
    \label{fig:network}
\end{figure}
For the simple square network with one diagonal connection (see picture \ref{fig:network}; the square contains the connections from Delft to the node $T$, including nodes $R$ and $S$) the indicator function becomes
\begin{align}
    \id_{\text{square}} &= \begin{cases}
        1, & \text{if } \id_{aRd} =1 \text{ or } \id_{aRcSe} =1 \\
        & \text{ or } \id_{bSe} =1 \text{ or } \id_{bScRd} =1,\\
        0, & \text{else}
    \end{cases} \notag \\ 
    &= 1- (1-\id_{aRd}) \times (1-\id_{aRcSe}) \notag \\
    & \qquad \times (1-\id_{bSe}) \times (1-\id_{bScRd}),
\end{align}
where $\id_{aRd}$ is the short notation for $\id_{a} \id_R \id_d$. Using the fact that $\id^2 = \id$, one obtains
\begin{align}
    \id_{\text{square}} &= \id_{aR}(\id_{d}+\id_{cSe} - \id_{decS}) \notag \\
    & \quad + \id_{bS}(\id_{e} + \id_{cRd} - \id_{ecdR}) \notag \\ 
    & \quad - \id_{abRS} (\id_{de} + \id_{cd} + \id_{ce}-2\times \id_{ced}) .
\end{align}
Note that all the indicator functions in the products are independent. Therefore one can just calculate the average value of each single term to get the average value of $\id_{\text{square}}$. In the special case that the edges and nodes all have the same respective probabilities for failures $\pr{\text{edge is broken}} = e$ and $\pr{\text{node is broken}} = n$, the probability for the network to be functional is simply
\begin{align}
    & \pr{\text{square is functional}} = \mathbb{E}[\id_{\text{square}}] \notag \\ 
    & \quad = en(e+e^2n - e^3n) + en(e + e^2n - e^3n) \notag \\ 
    & \quad \quad - e^2n^2(e^2 + e^2 + e^2 -2\times e^3) \notag \\ 
    & \quad = 2e^2 n + e^3 n^2 (2-5e+2e^2).
\end{align}

The network in Ref. \cite{rabbie2022designing} from Delft to Groningen is then a series of a square network, a node $T$ and a single connection $f$. Therefore the indicator function for the connection from Delft to Groningen reads as
\begin{align}
    \id_X &= \id_{\text{square}} \times \id_{T} \times \id_{f} .
\end{align}


\section{Repairing of broken devices and resulting non-markovian effects}

In the following chapter we want to model the behavior of a network in which the connections not only break after some time, but also get repaired. We assume that the repairing takes some time. Mathematically speaking, the connections can break in each time step with probability $p\down$, e.g. the breaking is geometrically distributed. If a connection is broken it gets repaired after $\tau$ time steps. After this repairing time, the repaired connection can either break directly again with probability $p\down$ or stay functional with probability $1-p\down$. Note that this means that it is not guaranteed that after $\tau$ time steps the 
connection is functional again, it can well happen that it is broken for $2\tau$
or even more multiples of $\tau$.

We want to calculate the probability that a block of $N$ connections or a network behaves in a certain way in $t$ arbitrarily chosen consecutive time steps. 
To reach this goal we first calculate the probabilities for a single connection to behave in a certain way in up to three consecutive time steps. Then we show how these probabilities can be used to describe the behavior of a block of $N$ connections and more complex networks.

\subsection{A single connection}

We start be calculating all needed probabilities for a single connection. 

\begin{proposition}\label{pro:eff_prob}
    The probability for a single connection with failure probability $p\down$ and repairing time $\tau$ to be broken in an arbitrarily chosen time step is given by
    \begin{align}\label{eq:eff_prob}
        p_{\mathrm{eff}}^b = \frac{\tau}{1/p\down - 1 +\tau} .
    \end{align}
\end{proposition}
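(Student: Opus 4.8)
The plan is to identify a discrete-time renewal structure in the evolution of a single connection and then apply the renewal--reward theorem, so that the claimed probability emerges as the long-run fraction of broken time steps, which coincides with the probability of finding the connection broken in an arbitrarily (uniformly) chosen time step.

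First I would define the renewal epochs to be those time steps at which the connection freshly enters the broken state. By the model, each such epoch opens a cycle consisting of two phases: a \emph{broken phase} of exactly $\tau$ time steps, followed by a \emph{functional phase} whose length $G$ is the number of consecutive steps the connection stays working before it breaks again. The crucial point---and the main obstacle---is to recognize that, with this choice of epochs, the broken phase contributes a \emph{constant} $\tau$ steps per cycle, even though the ``break directly again with probability $p\down$'' mechanism might naively suggest a random multiple of $\tau$. That mechanism is instead absorbed into the functional phase: breaking immediately after repair simply means $G=0$, which opens the next cycle (again a clean $\tau$-block) with no separate bookkeeping of $2\tau, 3\tau, \dots$ blocks.

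Next I would compute $E[G]$. Since the connection breaks independently in each functional step with probability $p\down$, the variable $G$ is geometric with $\pr{G=k} = (1-p\down)^k p\down$ for $k=0,1,2,\dots$, where $k=0$ corresponds to immediate re-breaking. Hence
\begin{align}
E[G] = \sum_{k=0}^{\infty} k\,(1-p\down)^k p\down = \frac{1-p\down}{p\down} = \frac{1}{p\down}-1,
\end{align}
which reproduces the ``$1/p\down-1$ functional steps'' statement of the main text, including the off-by-one arising from the convention that the step in which breaking occurs is counted as the first broken step rather than a functional one.

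Finally I would invoke the renewal--reward theorem. The cycles are independent and identically distributed with finite mean length $E[L] = \tau + E[G] = \tau + 1/p\down - 1$, and the reward (number of broken steps) accrued per cycle is the constant $\tau$. The long-run fraction of broken time steps, equal to the probability of observing the connection broken at an arbitrary time step, is therefore
\begin{align}
p_{\mathrm{eff}}^b = \frac{E[\text{broken steps per cycle}]}{E[\text{cycle length}]} = \frac{\tau}{1/p\down - 1 + \tau},
\end{align}
which is the claim. The only points needing care are the finite-mean/ergodicity justification for applying renewal--reward (immediate from the i.i.d.\ cycle structure and finiteness of $E[G]$) and the correct handling of the $G=0$ case discussed above.
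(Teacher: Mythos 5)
Your proof is correct, and at its core it follows the same cycle decomposition as the paper, but it upgrades the paper's heuristic to a rigorous argument. The paper's proof is two lines: since breaking is geometric, the connection is on average functional for $1/p\down - 1$ steps (the mean $1/p\down$ minus the breaking step itself), then broken for $\tau$ steps, and the probability of finding it broken at an arbitrary step is \emph{declared} to be the ratio $\tau/(1/p\down - 1 + \tau)$. What the paper leaves implicit, and what your proof supplies, are exactly the two points you flag: first, why the ``break directly again with probability $p\down$'' mechanism does not require separate bookkeeping of broken stretches of length $2\tau, 3\tau, \dots$ --- your observation that immediate re-breaking is absorbed as the $G=0$ case of the functional phase, so every cycle contributes a clean, constant $\tau$ broken steps; second, why a ratio of \emph{expected} phase lengths equals the per-time-step probability at all, which is the renewal--reward theorem applied to the i.i.d.\ cycles of length $\tau + G$ with $E[G] = 1/p\down - 1$. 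The paper's phrasing silently identifies the long-run fraction of broken steps with a ratio of expectations (ratio of averages, not average of ratios), and that identification is precisely what needs the renewal/ergodic justification you give; it also tacitly interprets ``arbitrarily chosen time step'' as the long-run (stationary) probability, which you make explicit. So the two arguments buy different things: the paper's buys brevity, yours buys a watertight derivation at essentially no extra conceptual cost, and it generalizes immediately (e.g., to random repair times, where the reward per cycle would become $E[\tau]$).
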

\begin{proof}
    Since the breaking of a connection is geometrically distributed, the first break will happen on average in time step $1/p\down$. Thus, the connection is on average functional for $1/p\down-1$ time steps and then broken for another $\tau$ time steps. The probability that the connection is broken in an arbitrarily chosen time step is thus
    \begin{align}
        p_{\text{eff}}^b = \frac{\tau}{1/p\down - 1 +\tau}
    \end{align}
    and the probability that the connection is functional is
    \begin{align}
        p_{\text{eff}}^f = \frac{1/p\down - 1}{ 1/p\down - 1 +\tau} = 1 - p_{\text{eff}}^b .
    \end{align}
\end{proof}
In the next step we want to calculate the probability that a single connection is broken in not only one arbitrarily chosen time step but instead in $1 \leq t \leq \tau$ consecutive arbitrarily chosen time steps. 
\begin{figure}
    \centering
    \includegraphics[width=0.9\columnwidth]{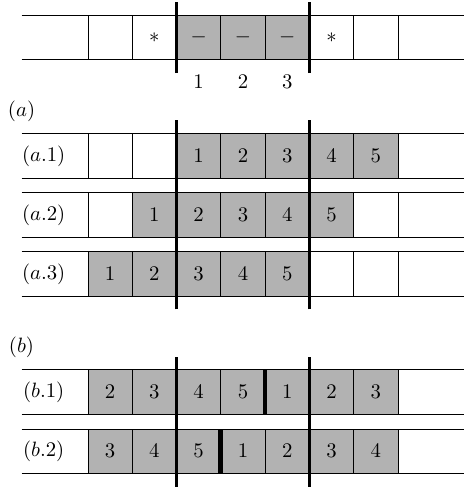}
    \caption{Possible states of a connection with repairing time $\tau=5$ to be broken in $t=3$ arbitrarily chosen consecutive time steps; $(a)$ if the first of the consecutive time steps coincides with one of the first $\tau - t+1= 3$ broken steps of the connection, the connection is broken throughout the three consecutive time steps; $(b)$ if the first of the consecutive time steps coincides with one of the last $t-1=2$ broken steps of the connection, then the connection has to break again directly after being repaired to remain broken through the three consecutive time steps.}
    \label{fig:prob_device_b}
\end{figure}
\begin{proposition}
    The probability for a single connection with failure probability $p\down$ and repairing time $\tau$ to be broken in $1 \leq t \leq \tau$ arbitrarily chosen consecutive time steps is given by
    \begin{align}
        m(t) = p_{\mathrm{eff}}^b \left( 1 - \frac{t-1}{\tau} (1 - p\down) \right) .
    \end{align}
\end{proposition}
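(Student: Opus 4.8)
The plan is to exploit the renewal structure of the connection's dynamics: in the long run the connection alternates between functional stretches and broken stretches, where every broken stretch lasts exactly $\tau$ consecutive time steps. Since the proposition asks for the probability over $t$ \emph{arbitrarily chosen} consecutive steps, I interpret this as a stationary (long-time-average) quantity and condition on where the first of the $t$ steps falls inside a broken stretch. Labelling the positions within a broken stretch as $1,\dots,\tau$, the whole calculation reduces to (i) determining how likely the first step is to sit at a given position $j$, and (ii) deciding, for each $j$, whether the remaining $t-1$ steps stay broken.

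First I would establish the position distribution. Because each broken stretch contains exactly $\tau$ steps, one at each position $1,\dots,\tau$, the broken steps are spread uniformly over the $\tau$ positions. Combined with \cref{pro:eff_prob}, which gives $p_{\mathrm{eff}}^b$ as the probability that an arbitrary step is broken, this yields that the probability for an arbitrary step to be broken \emph{and} to sit at position $j$ equals $p_{\mathrm{eff}}^b/\tau$ for every $j$.

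Next I would split into the two cases indicated in \cref{fig:prob_device_b}. If $j \le \tau-t+1$ (case $(a)$), then positions $j,\dots,j+t-1$ all lie inside the same broken stretch, so all $t$ steps are automatically broken; there are $\tau-t+1$ such starting positions. If instead $\tau-t+2 \le j \le \tau$ (case $(b)$), the current stretch ends before all $t$ steps are covered, so the connection must break again immediately after being repaired; this occurs with probability $p\down$, and since the new stretch again lasts $\tau \ge t-1$ steps it always suffices to cover the remainder. There are $t-1$ such positions. Summing the contributions gives
\begin{align}
    m(t) = \frac{p_{\mathrm{eff}}^b}{\tau}\Big[(\tau-t+1) + (t-1)\,p\down\Big],
\end{align}
which rearranges directly to $p_{\mathrm{eff}}^b\big(1-\tfrac{t-1}{\tau}(1-p\down)\big)$, as claimed. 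The hypothesis $t \le \tau$ enters precisely here, ensuring that a single extra break in case $(b)$ is always enough to complete the run.

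The main obstacle is the rigorous justification of the uniform position distribution together with the independence between the event ``first step broken at position $j$'' and the subsequent break-again decision. This requires arguing that the underlying renewal process is stationary and that the $\mathrm{Bernoulli}(p\down)$ choice at the end of a broken stretch is independent of the stretch's internal structure; once this is granted, the remaining arithmetic is routine.
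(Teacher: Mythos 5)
Your proposal is correct and follows essentially the same route as the paper's proof: conditioning on the position $j$ of the first step within a broken stretch, splitting into the cases $j\le\tau-t+1$ (automatically broken throughout) and $j\ge\tau-t+2$ (requires an immediate re-break with probability $p\down$), and summing the two contributions weighted by $1/(1/p\down-1+\tau)$, which equals your $p_{\mathrm{eff}}^b/\tau$. The only difference is presentational: you make explicit the stationarity/uniform-position argument and the role of the hypothesis $t\le\tau$, which the paper treats informally via its ``on average'' renewal-cycle picture.
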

\begin{proof}
    The probability for the connection to be broken in $t$ consecutive time steps $m(t)$ can be calculated in a similar way as in Proposition \ref{pro:eff_prob} (see also Fig.~\ref{fig:prob_device_b}). We saw there that a connection is on average functional for $1/p\down-1$ time steps and then broken for another $\tau$ time steps. Thus, the first of the $t$ consecutive time steps has to coincide with one of the $\tau$ broken steps of the device. If it coincides with one of the first $\tau-t+1$ broken steps of the device then the device remains surely broken for the following $t-1$ time steps. If it coincides with one of the $t-1$ last broken steps of the device then the device has to break again directly after being repaired to stay broken in the remaining $t-1$ consecutive time steps. This happens with probability $p\down$. One therefore gets
    \begin{align}
        m(t) =& \frac{\tau-t+1}{1/p\down-1 + \tau} \times 1 + \frac{t-1}{1/p\down-1 + \tau} \times p\down \notag \\
        =& \frac{\tau}{1/p\down-1 + \tau} \left( 1- \frac{(t-1)(1-p\down)}{\tau}\right).
    \end{align}
\end{proof}
We now use the short notation $+$ or $-$ if the respective system is functional or not functional in a given time step and $\ast$ if the state of the system is arbitrary or not known. The probability that a connection is broken in $3$ consecutive time steps would then be denoted by $m(3) = p_c^{- - -}$ and the effective probability for $1$ time step by $p_{\text{eff}}^b = m(1) = p_c^-$. For a single connection all probabilities $p_c^{i_1,\dots,i_t}$ for the behavior $i_1,\dots,i_t$ in $t$ consecutive time steps can be calculated in the same fashion as above by counting all possible states of the connection. One last example for this type of calculation, the probability $p_c^{- + -}$, is given below, since this probability is needed in the next subsection treating blocks of $N$ connections.
\begin{proposition}
    The probability for a single connection with failure probability $p\down$ and repairing time $\tau$ to be broken, functional and broken again in $3$ arbitrarily chosen consecutive time steps is given by
    \begin{align}
        p_c^{- + -} = \frac{(1-p\down) p\down}{1/p\down - 1 + \tau} .
    \end{align}
\end{proposition}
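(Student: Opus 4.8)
The plan is to reuse the renewal/counting argument that established \cref{pro:eff_prob} and the formula for $m(t)$, reading the requested quantity as the long-run fraction of length-three windows that display the pattern. First I would recall the cyclic structure of a single connection: within one renewal cycle it is functional for (on average) $1/p\down - 1$ consecutive steps and then broken for exactly $\tau$ consecutive steps, so the mean cycle length is $1/p\down - 1 + \tau$. Interpreting $p_c^{- + -}$ as the probability that an arbitrarily chosen window of three consecutive steps reads broken--functional--broken then amounts to weighting each position inside the cycle equally by the factor $1/(1/p\down - 1 + \tau)$, exactly as in the derivation of $m(t)$.

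Next I would identify which positions can \emph{start} the pattern $- + -$. The first step must be broken, so it lies in the $\tau$-step broken stretch; but since the second step must be functional, the first step cannot be any of the first $\tau - 1$ broken steps, because each of those is followed (within the same stretch) by another broken step. Hence the first step has to coincide with the \emph{last} of the $\tau$ broken steps, and there is exactly one such admissible position per cycle, carrying weight $1/(1/p\down - 1 + \tau)$.

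Conditioned on sitting at this terminal broken step, I would then track the repair decision: the connection becomes functional at the following step with probability $1 - p\down$ (supplying the required $+$), and then, being freshly repaired, it breaks again at the next step with probability $p\down$ (supplying the final $-$). Multiplying the single admissible position weight by these two independent factors gives
\begin{align}
p_c^{- + -} = \frac{1}{1/p\down - 1 + \tau}\,(1-p\down)\,p\down,
\end{align}
which is the claimed expression.

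The only delicate point — the place where an error could creep in — is the bookkeeping at the boundary of the broken stretch. One must recognise that of the $\tau$ broken steps only the terminal one can be followed by a functional step, and that the ensuing $+$ then $-$ correspond precisely to ``repaired, then immediately breaks again,'' carrying probability $(1-p\down)\,p\down$ rather than $p\down$ alone. Everything else is identical to the equiprobable-position counting already used for $p_{\mathrm{eff}}^b$ and $m(t)$, so no new machinery is required.
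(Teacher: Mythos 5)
Your proposal is correct and takes essentially the same approach as the paper: you pin the first of the three steps to the unique admissible position in the renewal cycle (the last step of the $\tau$-step broken stretch, weight $1/(1/p\down - 1 + \tau)$) and then multiply by $(1-p\down)$ for the repair and $p\down$ for the immediate re-break. The bookkeeping point you flag — that only the terminal broken step can be followed by a $+$ — is exactly the observation the paper's proof rests on.
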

\begin{proof}
    The connection is on average functional for $1/p\down-1$ time steps and then broken for another $\tau$. For the connection to be broken, functional and broken again the first of the three consecutive time steps has to coincide with the last time step of the repairing time (see figure \ref{fig:prob_device_bwb}). This happens with probability $\frac{1}{1/p\down - 1 + \tau}$. After that, the connection stays functional with probability $1-p\down$ and breaks again with probability $p\down$. All in all it holds
    \begin{align}
        p_c^{- + -} = \frac{1}{1/p\down - 1 + \tau}\times (1-p\down) \times p\down .
    \end{align}
\end{proof}
\begin{figure}
    \centering
    \includegraphics[width=0.9\columnwidth]{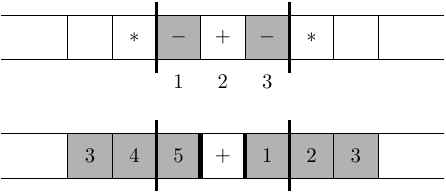}
    \caption{Possible states of a connection with repairing time $\tau=5$ which is broken, functional and broken again in $t=3$ arbitrarily chosen consecutive time steps; the first of the consecutive time steps has to coincide with the last of the broken steps of the connection; after being repaired the connection has to stay functional and then break again.}
    \label{fig:prob_device_bwb}
\end{figure}
Note that probabilities of the type $p_c^{- \ast -}$ are given by the sum of all possible cases:
\begin{align}
    p_c^{- \ast -} = p_c^{- + -} + p_c^{- - -}.
\end{align}

\subsection{A block of $N$ connections}

To calculate now the probabilities for a block of $N$ connections to be in a certain state in $t$ consecutive time steps we use two main ingredients. First, we notice that a block is broken iff every single connection is broken. So all probabilities for combinations of $-$ and $\ast$ can be directly calculated by raising the respective probability for a single connection to the power of $N$. For example it holds
\begin{align}
    p^{- - -} = \left( p_c^{- - -} \right)^N \quad \mathrm{and} \quad
    p^{- \ast -} = \left( p_c^{-\ast -} \right)^N .
\end{align}
Secondly, we can start computing the probabilities for $t=1$ arbitrary time step and then recursively calculate the still missing probabilities for more time steps (the ones which contain at least one $+$) by using the marginals. For example it holds $p^{+-} + p^{--} = p^{\ast -} = p^{-}$. For $1$ time step we get
\begin{align}
    p^{+} &= 1- p^{-},
\end{align}
for $t=2$ time steps
\begin{align}
    p^{+-} = p^{-+} &= p^{-} - p^{--},\\
    p^{++} &= p^{+} - p^{+-},
\end{align}
and for $t=3$ time steps
\begin{align}
    p^{+--} = p^{--+} &= p^{--} - p^{---},\\
    p^{-+-} &= p^{- \ast -} - p^{---},\\
    p^{-++} = p^{++-} &= p^{-+} - p^{-+-}, \\
    p^{+-+} &= p^{+-} - p^{+--}, \\
    p^{+++} &= p^{++} - p^{++-} .
\end{align}

\subsection{A chain or network}

To calculate the probability for a chain or a network to be in a given state in $t$ consecutive time steps we use the approach explained in Appendix \ref{ch:repairing}. If the functionality of the system is described by the random variable $X$ we include the time dependence by describing the system in the time steps $1,\dots,t$ by the random variables $X_1,\dots,X_t$. The system is then functional in time step $k$, iff $\id_{X_k} = 1$. The system being functional in the first step, in an arbitrary state in the second step and being broken in the third step can, e.g., be expressed by the indicator function
\begin{align}
    \id_S^{+ \ast -} = \id_{X_1} \times 1 \times (1-\id_{X_3})
\end{align}
and the according probability can be calculated using
\begin{align}
    p_S^{+ \ast -} &= \pr{X_1 = 1, X_2\in\{ 0,1 \},  X_3 = 0} \notag \\
    &= \mathbb{E}[ \id_{X_1} \times 1 \times (1-\id_{X_3}) ] 
\end{align}
If $\id_{X}$ is an expression containing different indicator functions, then each of these needs to get indexed by the time steps. So, e.g., two components in series in time step $k$ are described by $\id_{X_k} = \id_{A_k}\times \id_{B_k}$. The only thing one has to keep in mind is that different components (e.g., $A$ and $B$) may be independent, but the same component in different time steps $A_k$ and $A_{j}$, $j\neq k$, are not. We have to use the probabilities for a single block described in the section before. For example it holds for a block $A$
\begin{align}
    \mathbb{E}[ \id_{A_1} \id_{A_3} ] = p^{+ \ast +} \neq p^{+} p^{+} = \mathbb{E}[ \id_{A_1}] \mathbb{E}[\id_{A_3} ] 
\end{align}

\section{Description of the model and the numerical calculations}

In the main text of the paper, we ask whether technical devices are 
working or not, that is, whether we could in principle establish an 
entangled connection between two nodes or not. As an application, 
we simulated the actual generation of entangled links on such not 
perfectly working repeater chains and networks. In this simulation, 
to establish an entangled link between two far away parties, we 
first create shorter entangled links between repeater stations and 
then connect these by Bell measurements. In our model, we assume 
deterministic entanglement swapping. Therefore it is sufficient to 
create one link on each segment on the way between the two end nodes. 
We further assume that established links experience decoherence while 
waiting for other links to get established, this decoherence motivates 
a potential cutoff after certain time, as the entanglement in the 
links becomes too weak or disappears. We are interested in 
the secret-key rates achievable on a chain or network depending 
on the functionality of the technical devices. In the following, 
we discuss details about the simulation program (written in Python) 
and chosen simulation parameters.

We compute the waiting time as follows: We first simulate the waiting 
time for each individual connection using Monte Carlo simulation, 
see for example Ref.~\cite{Walter_2015_montecarlo}.  In the case of 
multiplexing, we determine the waiting time of a block by taking 
the minimum waiting time of all connections associated to that 
block. The waiting time of the full chain is the maximal waiting 
time among all blocks. In the network case, the waiting time of 
the full network is the waiting time of the path which gets 
established first.

We model memory decoherence as depolarisation noise as follows.
Each entangled pair is modelled as a Werner state
\begin{align}
\rho(w) = w |{\Psi^-}\rangle \langle {\Psi^-}|  + (1 - w) \frac{ \id}{4},
\end{align}
where $w$ is the visibility (or Werner parameter) and $ |{\Psi^-}\rangle$ 
is a perfect Bell state. We assume that the sources distribute maximally 
entangled states, i.e.,\ $w = 1$, but stored entangled states experience 
decoherence. At time $t$ after establishing the entangled link, the 
visibility becomes  $w(t) =  e^{-t / T_{\text{coh}}}$, where 
$T_{\text{coh}}$ is the coherence time of the memories. The  
fidelity $F = \expval{\rho(w)}{\Psi^-}$ is given by $F = \frac{ (1 + 3 w)}{4}$. 
An entanglement swapping operation using two Werner states 
$\rho(w_A), \rho(w_B)$ yields a Werner state $\rho(w_A \times w_B)$.

For the chosen simulation parameters we assume the distance from the source 
to the repeater to be $L = 100$ km. For attenuation losses of $0.2$dB/km, 
i.e., an attenuation length $L_{\text{att}} \approx 22$ km, the probability 
to generate a link successfully is given by $P_{\text{gen}} = e^{- \frac{L}
{L_{\text{att}}}} \approx 0.01$. The length of one time step is limited 
by $t_{\text{ts}} = \frac{2}{c} L \approx \frac{2}{3} 10^{-3} s$, where $c$ 
is the speed of light. We assume a memory coherence time of 
$T_{\text{coh}} = 1s = 1500 \times t_{\text{ts}}$. For a motivation of 
the simulation parameters, see Ref.~\cite{avis2023asymmetric}.

We simulate the behaviour of a network and a chain. We consider failure 
of devices on a different timescale (hours or days) than link generation 
(seconds). Therefore we assume that for the time of one entanglement 
generation attempt the network or the chain has a fixed configuration 
of working and non-working devices. The considered network is shown in 
\cref{fig:simple} and no multiplexing is used. Due to the aging process, 
some links might not work, so that one 
cannot establish entanglement with them. Using symmetries, we need to 
consider five different network configurations which are shown 
in \cref{fig:paths}.

\begin{figure*}[t]
\includegraphics[width=0.9\linewidth]{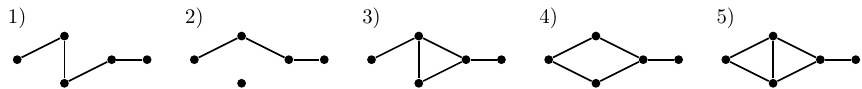}
\caption{Possible network configurations allowing for entanglement distribution
between the end nodes for the network in the Netherlands (see \cref{fig:simple}),  
up to symmetries.
\label{fig:paths}
}
\end{figure*}

The chain is of the type shown in \cref{fig:repeaterchain}. It  consist of 
$M=6$ segments with a multiplicity of $N=3$. Due to failures, we might 
establish entanglement on segments with lower multiplicity. Taking symmetries 
into account, we have 28 different functional configurations of multiplicities. 
This number results from taking all combinations of $1$, $2$ and $3$ functional 
connections per edge up to symmetry. For example, the first configurations would 
be $(1,1,1,1,1,1)$, $(1,1,1,1,1,2)$, $(1,1,1,1,1,3)$ and $(1,1,1,1,2,2)$, where 
the first three all have one functional connection in the first $5$ edges and 
$1$, $2$, $3$ connection(s) in the last edge, respectively. The last configuration 
has one connection in the first four edges and two connections in the last two edges. Note 
that the second configuration is symmetric to every permutation $(1,1,1,1,2,1)$, $(1,1,1,2,1,1)$ etc.~of it.

We modeled $6 \times 10^6$ data points for each configuration of the network. Here,
each data point is generated by a simulation of the process of entanglement generation until success was achieved. Consequently, each data point consists of a time stamp
and a fidelity of the distributed state. From the obtained data we 
calculate the secret-key rate $R= r/ \langle T \rangle$ of the BB84 protocol 
which is computed as the fraction of the secret-key fraction $r$ and the average waiting time $\langle T \rangle$. We give here a short overview, for a more 
detailed discussion see Ref. \cite{li2021}. The secret-key 
fraction for a Werner state with visibility $w$ is given by
\begin{align}
&r(w) = \\ 
&\max\lbrace 0, 1 + (1-w)\log_2 \frac{1-w}{2} + (1+w)\log_2 \frac{1+w}{2}  \rbrace. \notag
\end{align}

In order to characterize the effect of different cut-off times, we consider
the case that the entanglement generation attempt is terminated after
$t_{\mathrm{cut}}$ time steps. If there is, after this time, not in every 
edge an entangled link, then every existing link is erased and a new 
generation attempt is started. Let $T$ denote the time when every entangled 
link exists, so that the protocol reached a successful entanglement generation 
attempt. Then 
\begin{align*}
p_{\mathrm{cut}} = \pr{T\leq t_{\mathrm{cut}}}
\end{align*}
denotes the probability that the entanglement generation was successful during $t_{\mathrm{cut}}$ and \begin{align*}
    \frac{ \sum_{t=1}^{t_{\mathrm{cut}}} t \pr{T=t} }{ p_{\mathrm{cut}} } 
\end{align*}
is the average time until success during that attempt. The average time 
for which the cut-off procedure is repeated until the first successful 
attempt is reached is then geometrically distributed with $p_{\mathrm{cut}}$ 
and given by
\begin{align*}
t_{\mathrm{cut}} \sum_{k=1}^\infty k\times p_{\mathrm{cut}} ( 1-p_{\mathrm{cut}} )^k = t_{\mathrm{cut}} \frac{1-p_{\mathrm{cut}}}{p_{\mathrm{cut}}} .
\end{align*}
All in all, the average waiting time until entanglement generation is given 
by the sum of the time until the first successful attempt and the waiting 
time for that success in that attempt:
\begin{align*}
    \langle T \rangle = t_{\mathrm{cut}} \frac{1-p_{\mathrm{cut}}}{p_{\mathrm{cut}}} + \frac{ \sum_{t=1}^{t_{\mathrm{cut}}} t \pr{T=t} }{ p_{\mathrm{cut}} } .
\end{align*}
The resulting end-to-end Werner state has then a visibility parameter $W$
which is the product of the visibilities of the single-connection Werner states
in the connected path. Its mean value is given by
\begin{align*}
    \langle W \rangle = \frac{ \sum_{t=1}^{t_{\mathrm{cut}}} W(t) \pr{T=t} }{ p_{\mathrm{cut}} } 
\end{align*}

From the obtained data we can now calculate the secret-key rates $R_i$ for 
every possible functional configuration $i$ of the network and the chain 
and every choice of $t_{\mathrm{cut}}$. The average secret-key rate which 
can be achieved by this protocol on the respective topology is then given 
by the weighted sum of the different rates where the weights are the 
probabilities $\pr{\mathrm{configuration}\ i }$ of the different configurations.

The simulation is single threaded but multiple simulations were executed 
in parallel. The result was obtained by running  eight parallel simulations 
for one to two days on a computer with the following specification: Intel 
Core i7-4790 CPU with 3.60GHz running at 4 cores / 8 threads with 16 GB of 
memory available. The computation was not memory intensive. The code can be 
found online \cite{code_git}. The computed data can be made available upon
reasonable request.

\vspace{0.5cm}
\twocolumngrid
\bibliographystyle{apsrev4-2}
\bibliography{literature}

\end{document}